%
%
%
%
%
%
%
\documentclass[%
 reprint,
 amsmath,amssymb,
 aps,
 pra,
longbibliography
]{revtex4-1}
\usepackage{soul}
\usepackage{graphicx}
\usepackage{dcolumn}
\usepackage{bm}
\usepackage{braket}
\usepackage{amsmath}
\usepackage{bbold}
\usepackage{xcolor}
\usepackage{amsthm}
\usepackage{physics}
\usepackage{enumitem}
\usepackage{centernot}
\newtheorem{theorem}{Theorem}
\newtheorem{proposition}{Proposition}

\theoremstyle{definition}
\newtheorem{definition}{Definition}
\theoremstyle{remark}
\newtheorem{remark}{Remark}



\begin{document}

\preprint{APS/123-QED}

\title{The asymptotic emergence of the Second Law for a repeated charging process}

\author{Marcin {\L}obejko, Pawe{\l} Mazurek, Micha{\l} Horodecki}
\affiliation{International Centre for Theory of Quantum Technologies, University of Gda\'nsk, 80-308 Gda\'nsk, Poland }

\date{\today}

\begin{abstract}
In one of its versions, the Second Law states: ``It is impossible to construct an engine which will work in a complete cycle, and produce no effect except the raising of a weight and cooling of a heat reservoir''. While the Second Law is considered as one of the most robust laws of Nature, it is still challenging how to interpret it in a fully quantum domain. Here we unpack the true meaning of the ``cyclicity'' and formulate the Second Law for a generic quantum battery via its asymptotic properties of a charging process rather than in terms of a single cycle. As a paradigm, we propose a machine consisting of a battery that repeatedly interacts with identically prepared 
systems. We then propose the Second Law in the form: The ergotropy of the battery may increase {\it indefinitely} if and only if systems are in a non-passive state. One of the most interesting features of this new formulation is the appearance of the passive states that naturally generalize the notion of the heat bath. In this paper, we provide a handful of results that supports this formulation for diagonal systems. Interestingly, our methodology meets a well-known theory of Markov chains, according to which we classify the general charging processes based on the passivity/non-passivity of charging systems. In particular, the adopted mathematics allows us to distinguish a subtle asymptotic difference  between the indefinite increase of the battery's energy (induced by the maximally mixed states) and of ergotropy (induced by the non-passive states) in terms of the so-called {\it null-recurrent} versus {\it transient} Markov chains. 
\end{abstract}

\maketitle

\onecolumngrid

\section{Introduction}

The Second Law is believed to be one of the most stable principles of Nature. In Planck's version, the Second Law sounds as follows: \emph{``It is impossible to construct an engine which will work in a complete cycle, and produce no effect except the raising of a weight and cooling of a heat reservoir''}. In other words, in macroscopic reality, the Second Law is the statement of something impossible; i.e., it is impossible to accumulate any potential energy (like gravitational or electrical) in a cyclic process, having the charging resource at equilibrium. 

Commonly, by charging, we mean the process of increasing the energy of some energy-storage device (i.e., a battery). This intuition, however, is not precise since the battery is the storage of the ``potential to work'' rather than the energy itself. It is understandable if one imagines that the battery may increase its energy through contact with a heat bath (i.e., via thermalization), which obviously violates Planck's statement. This is the reason why in thermodynamics we go beyond the average energy and replace it by the notion of \emph{free energy} or (more generally) \emph{ergotropy}. By definition, the ergotropy is the maximum amount by which the system is able to lower its energy in a cyclic, unitary process \cite{Allahverdyan2004}. A new class of the so-called \emph{passive states} with no ergotropy is introduced accordingly. More interestingly, Pusz and Woronowicz \cite{Pusz1978} have shown that a subclass of passive states (called \emph{completely passive states}), unable to decrease energy even if an arbitrary number of their copies is present, is uniquely given by the Gibbs states, i.e., the states at equilibrium. Consequently, one may claim the following is  Planck's formulation of the Second Law in a quantum domain: {\it The state of heat reservoir is completely passive state}. However, the Pusz-Woronowicz formulation does not introduce the battery explicitly, such that it only applies when the battery is a macroscopic object and, thus, is described by classical physics.

There is a question, therefore, whether one can extend the formulation of the Second Law to include an explicit energy-storage device. For microscopic batteries, quantum mechanics starts to play a role and the evolution of the system is not unitary anymore due to quantum back reaction. Consequently, this fully quantum formulation raises a big problem with the Second Law since the passivity argument cannot be used anymore. As an extreme example, let us bring attention to the recent result \cite{Tanmoy2022} showing that the battery (e.g., harmonic oscillator) coupled to the heat bath (i.e., completely passive state), in principle, may increase not only its energy but even the ergotropy (although still constrained by the fundamental bound expressed in terms of the free energy).  

So far, to remove this paradox, symmetries have been imposed. Namely, the battery was usually considered as an infinite weight, and the allowed operations were imposed to be translationally invariant \cite{Brunner2012, Skrzypczyk2014, Alhambra2016, Lobejko2020, Lobejko2021, Lobejko2022}. In such a setup, the battery's energy cannot be increased by interaction with the passive state, and the Second Law is saved. However, the above scenario is very unnatural - first of all, the unbounded from below the Hamiltonian of the battery is unphysical. Most importantly, the translational invariance of the operations is a very unnatural and strong condition. The Second Law should hold in some generic conditions rather than just apply to some artificial, highly symmetric scenario. Note that even the modern formulation of the Second Law expressed in the form of famous fluctuation relations \cite{Jarzynski1997, Campisi2011, Esposito2009} implicitly assumes weight as a work reservoir since it operates with differences of energy only \cite{Bartosik2021}. 

A critical ingredient of Planck's formulation is cyclicity - it is demanded that the only change is raising the weight and decreasing the energy of the bath. If the energy extraction is done through the working body, this would mean that the final state of the working body must be returned to its initial state (e.g., if we expand gas, we can raise the weight, but to complete the cycle, we have to compress the gas, which will return it to the initial state). However, in the presence of a microscopic battery, such cyclicity is not enough, as one can (ignoring even the presence of a working body) run such an interaction between battery and bath so that not only energy but also ergotropy will increase (reported in \cite{Tanmoy2022}, as we mentioned before). So even though the cycle was completed, the work was extracted against the formulation of the Second Law. 

Nevertheless, one may argue that, in spirit, Planck's formulation still holds since the battery state has to degrade in some sense after charging. In other words, an increase of ergotropy is possible due to the initial ``negentropy'' of the battery, while the weight from Planck's formulation was implicitly assumed not to suck entropy (which has also been adapted in the translational symmetry). It is, however, very hard to state the Second Law more precisely in terms of a single cycle for a generic battery since we cannot impose that battery's state will not change - it has to be charged. 

In this paper, we propose to resolve this problem by shifting the paradigm away from a single completed cycle to the asymptotic behavior of the charging process and formulate the Second Law in terms of its asymptotic behavior. Namely, we investigate what happens if we do not stop after the first cycle, but try to continue the charging. Then it is intuitive that the mentioned degradation of the battery's state should affect the charging process to such an extent that it reaches a steady state, making indefinite charging impossible. For definiteness, we consider many identically prepared systems and allow them to interact one by one with the battery. Finally, we postulate the Second Law as follows:\\

{\it The ergotropy of the battery may increase indefinitely if and only if the systems are in a non-passive state.} \\ \\
We stress that this formulation generalizes Planck's statement beyond heat reservoirs to the passive states, where the one-by-one machine-like setup is crucial since, otherwise, one can pair the systems to induce a non-passivity. 

\begin{figure}[t]
    \includegraphics[width=0.5\textwidth]{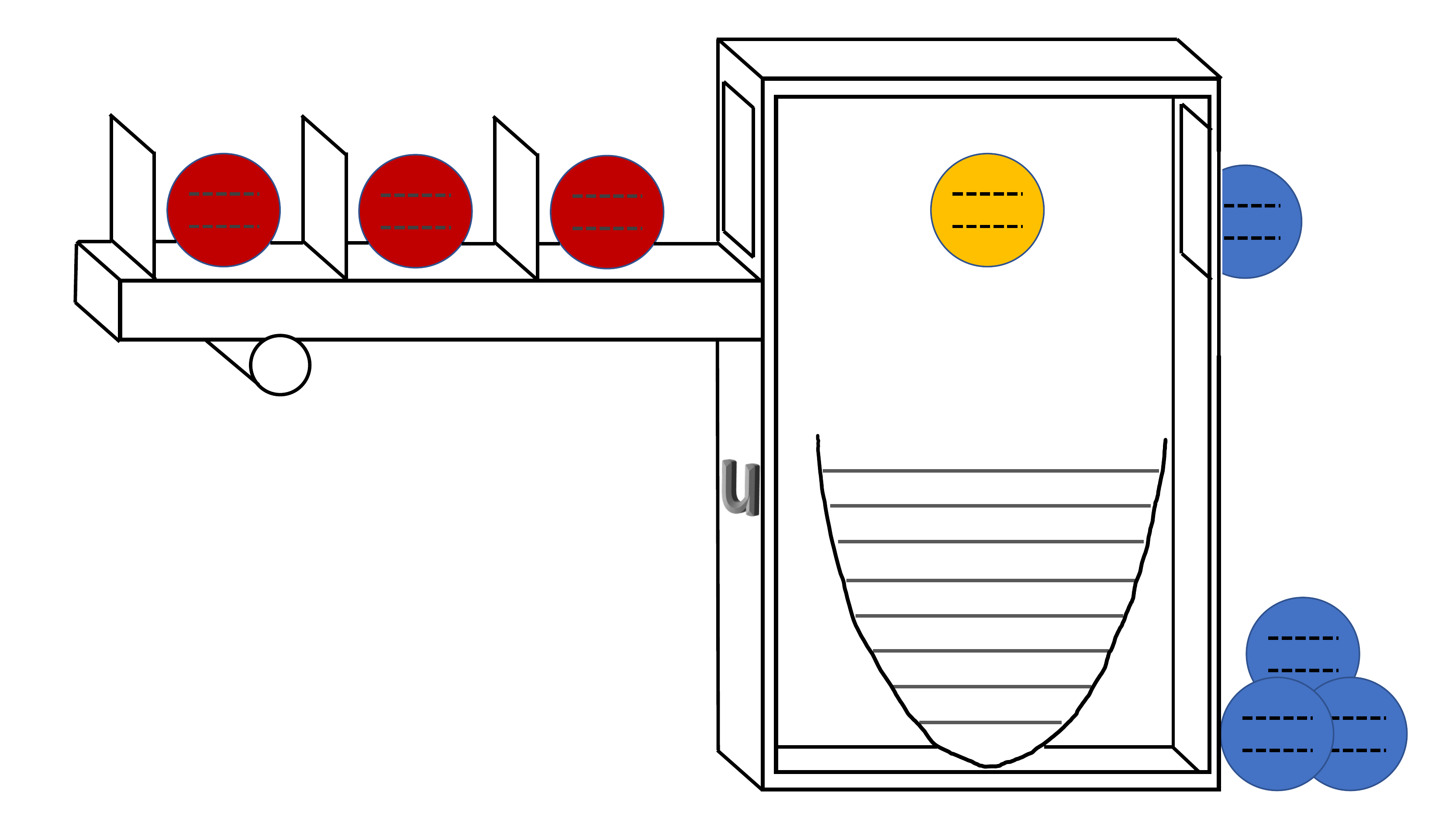}
\caption{\emph{A paradigm of a cyclic charging process and the Second Law.} Collection of identically prepared systems are one by one fed to a machine performing an energy-conserving unitary operation on a system and the battery (e.g., harmonic oscillator). The Second Law emerges in the asymptotic limit, after many elementary steps, which we postulate as: The ergotropy of the battery may increase indefinitely if and only if the systems are in a non-passive state. }
\label{charging_illustration}
\end{figure}

This paper aims to provide a handful of results that support the above formulation of the Second Law and to provide insight into the general features of the charging process (depending on the passivity/non-passivity of the charging systems). For this purpose, we adopt a collision-like model (\cite{Ciccarello2021, Salvia2022, Seah2021, Cattaneo2021, Rodrigues2019}, see \cite{Cusumano2022} for a recent review). As far as the conjecture looks simple, a closer inspection shows that it is entirely nontrivial, even in the simplest scenarios. First of all, the generalization to arbitrary quantum states with a coherent contribution to the ergotropy makes it, at the moment, quite intractable, even in the case of lowest dimensional systems. Thus, in this paper, we only concentrate on the diagonal case, which, nevertheless, reveals several interesting subtleties. Our main result is a matching of the diagonal case with the theory of Markov chains and identifying the passive states with the so-called {\it recurrent} Markov process and the non-passive case with {\it transient} one. We prove this correspondence for the two-level system and provide the numerical simulation for the higher dimensional ones. In this way, drawing work is portrayed as a transient process, where the battery is indeed charged with useful, non-passive energy. On the contrary, passive states cannot induce such transient processes, leading the battery instead to the passive stationary state. The mathematical notions adopted from the theory of Markov chains allow us also to distinguish between {\it strictly passive} and {\it maximally mixed} states. As we show for the two-level systems, the latter induces the so-called {\it null-recurrent} process, while strictly passive states correspond to {\it positive-recurrent} processes. This in particular reflects the fact that with a maximally mixed state, one can increase the energy of the battery indefinitely. Yet, the charging effect is asymptotically qualitatively different from the transient process, which only a non-passive state can induce.


\section{Charging process classification and the Second Law}
Let us first describe the general model of a charging process with explicit battery. We take $m$ identical copies of the quantum system $S$, i.e., $\xi^{\otimes m}$, that represent our charging resource (or fuel), and a single quantum state $\rho_0$, that represents the energy-storage device (i.e., a battery $B$). We want to charge the battery via the subsequent interactions of the type: $\rho_n \otimes \xi \to U \rho_n \otimes \xi U^\dag$, where $\rho_n = \Tr_S [U \rho_{n-1} \otimes \xi U^\dag]$ is the state of the battery after the $n$ elementary charging steps and $U$ is the unitary operator. In other words, we make use of copies of the state $\xi$ by sequentially coupling them to the current state of battery $\rho_n$. The system is later discarded, as illustrated in Fig. \ref{charging_illustration}. Within this general charging scenario, we further assume: 
\begin{enumerate}[label=\alph*)]
\item \emph{The battery is a harmonic oscillator system}. The energy spectrum of the oscillator is crucial for our understanding of the Second Law since, on the one hand, it is unbounded from above, and thus it describes an ideal battery with an infinite capacity able to capture the subtleties of the asymptotic limit, and on the other hand, it is bounded from below, i.e., it describes the physical system with the ground state (on the contrary to the ideal weight, see \cite{Lipka-Bartosik2021}). 
\item \emph{The unitary $U$ is strictly energy-preserving, i.e., $[U, H_S + H_B] = 0$, where $H_S$ and $H_B$ are free Hamiltonians of the system and the battery.} This assumption expresses the collisional type of the interaction, i.e., systems do not interact neither before nor after the coupling. From the charging perspective, it precludes any uncontrollable flow of the energy. 
\end{enumerate}

Now, let us consider the simplest charging process as possible. We take a two-level system (TLS) with the same frequency of the harmonic oscillator (for simplicity we put $\hbar \omega=1$), and with the diagonal state $\xi$ represented by the probabilities $s_1$ (for a ground state $\ket g $) and $s_2$ (for the excited state $\ket e$). Then, we take the charging unitary in the form:
\begin{eqnarray} \label{unitary_example}
    U =  \dyad{1,g} + \sum_{n=2}^\infty \big(\dyad{n,g}{n-1,e} + \dyad{n-1,e}{n,g} \big),
\end{eqnarray}
where $\ket{n,g(e)}$ is a product of the $n$-th eigenstate of the battery ($n=1,2,\dots$) and ground (excited) state of the TLS. Finally, we are interested in the final energy distribution of the battery after $m$ charging steps, given by $p_k(m) = \bra k \rho_m \ket k$. The particular realization of this process is presented in the Fig. \ref{fig:battery_distribution}.

\begin{figure}
    \includegraphics[width=0.45\textwidth]{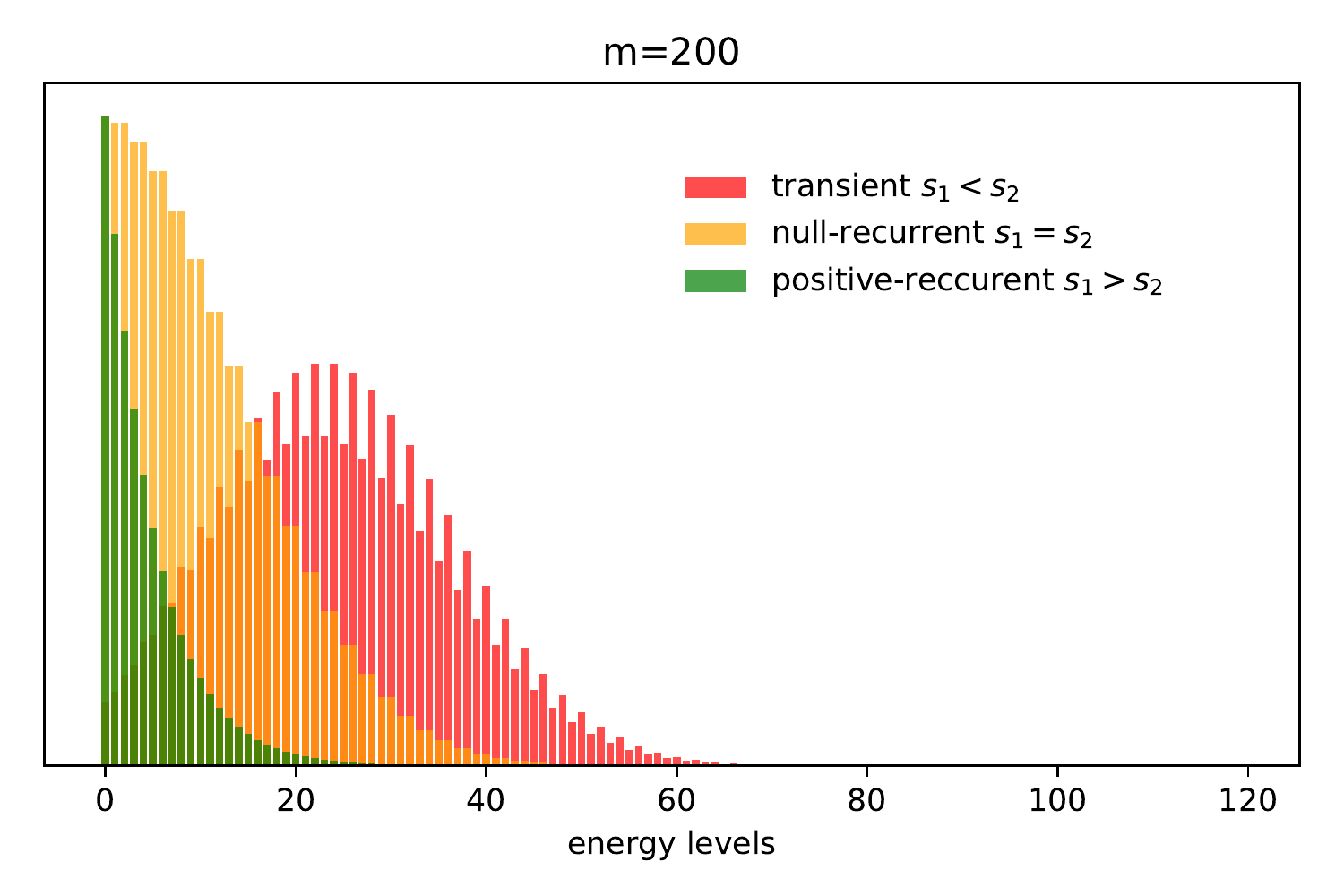}
    \includegraphics[width=0.45\textwidth]{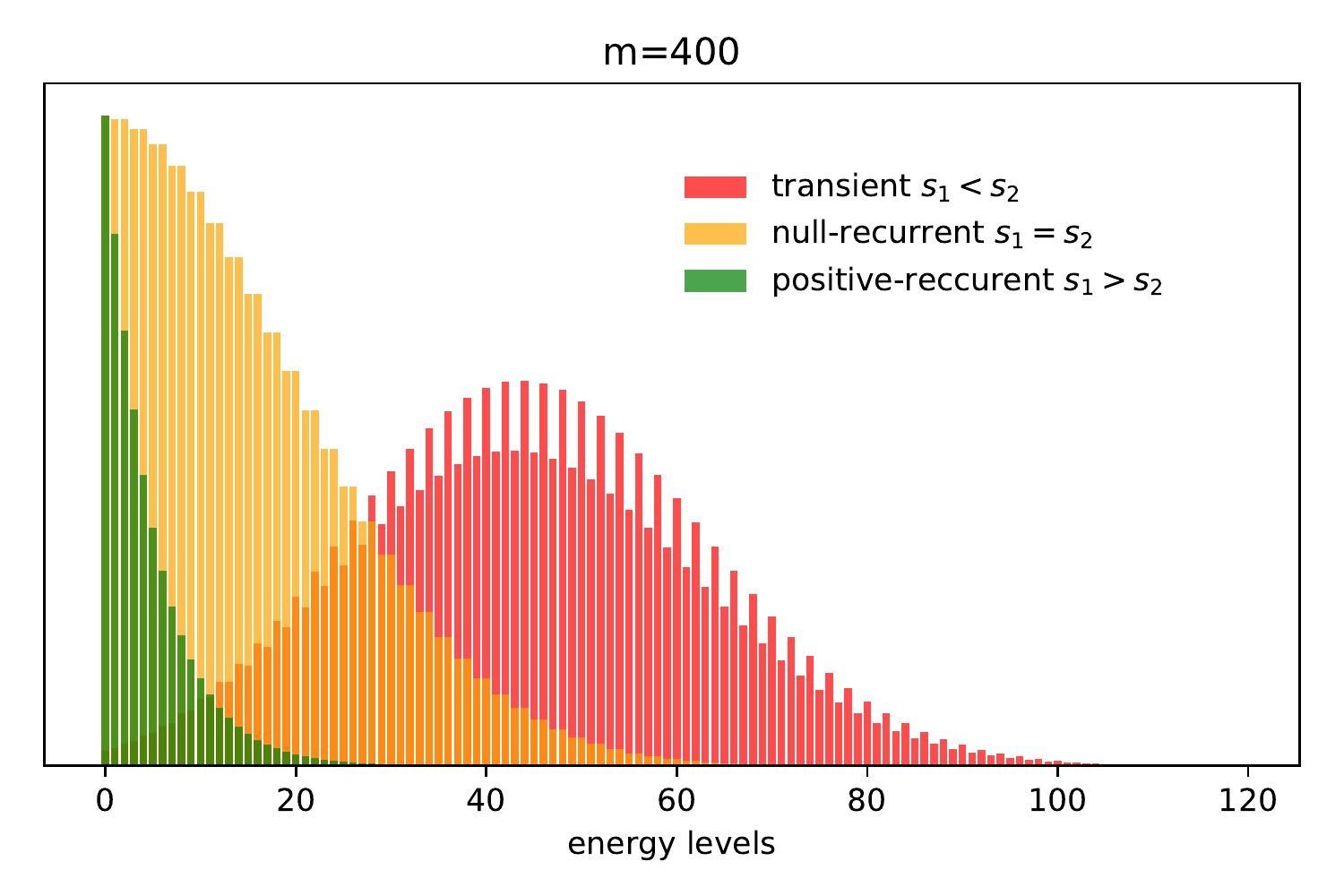}
    \includegraphics[width=0.45\textwidth]{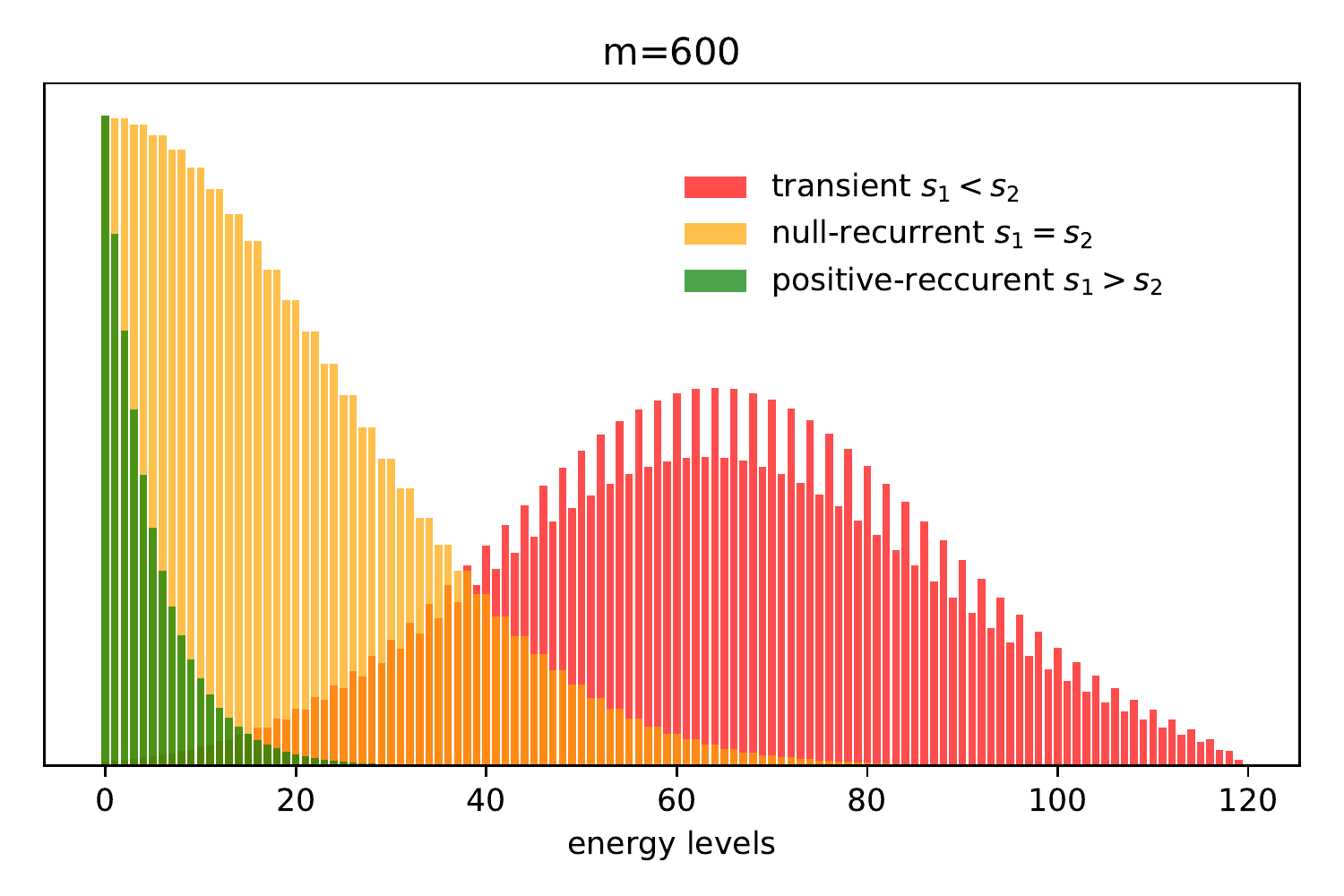}
\caption{\emph{The energy distribution of the battery after the $m$ charging steps starting from the ground state.} Three different charging process are present for the same unitary given by Eq. \eqref{unitary_example} but for different states of the charging TLSs. The green distribution, corresponding to the positive-recurrent process for the strictly-passive state of the TLS ($s_1 > s_2$), is the stationary process that quickly brings the battery to the Gibbs state. The orange distribution that represent the null-recurrent state (for maximally mixed state with $s_1=s_2$) is a non-stationary process that leads to the indefinite grow of the energy (but without accompanied increase of the ergotropy). The red distribution correspond to the non-stationary transient process (with the active state $s_1 < s_2$) that leads to the indefinite increase of the battery's ergotropy. Probability distributions are arbitrarily normalized. } 
\label{fig:battery_distribution}
\end{figure}

We observe here a very interesting behaviour: The charging process of the battery can be classified into a three distinct classes, which solely depends on the relation between probabilities $s_1$ and $s_2$. We classify those as:
\begin{enumerate}[label=(\roman*)]
\item ($s_1 < s_2$) \emph{Transient.} The state of the battery ``running away'' from the ground state, such that with growing number of charging steps ($m \to \infty$), the battery does not approach any stationary state and its both energy and (incoherent) ergotropy grows indefinitely. 
\item ($s_1 > s_2$) \emph{Positive-recurrent.} The state of the battery approaches a unique stationary state, i.e.,  the Gibbs state obeying the condition $p_k/p_{k+1} = s_1/s_2$ (when $m \to \infty$).  
\item ($s_1 = s_2$) \emph{Null-recurrent.} The state approaches the non-normalizable maximally mixed state, i.e., there is no stationary distribution and (when $m \to \infty$) the energy grows indefinitely (but not the ergotropy!). 
\end{enumerate}

We see that the asymptotic behavior of the battery is solely dictated by the passivity features of the fuel (state of the system $\xi$) and is robust to the initial state of the battery $\rho_0$. The presented paradigm expresses the Second Law of thermodynamics, showing that only the non-passive state provides an inexhaustible resource for the charging process. The crucial observation is a difference between \emph{transient} and \emph{null-recurrent} charging, where the former may increase the ergotropy of the battery, whereas the latter increases only the energy.      

Consequently, the fundamental question that should be addressed is whether this classification still holds for arbitrary energy-preserving unitary and for arbitrary states $\xi$ of the $d$-dimensional system \footnote{For non-diagonal states $\xi$ we generalize the notion of passivity.}? If the answer is positive, the introduced classification will be one of the most robust formulations of the Second Law for a charging process in quantum mechanics that is independent of any details of the process and initial condition of the battery. 
Thus, we postulate the Second Law in the form presented in the Introduction. In the following, we make a few first steps towards the general proof of this conjecture. 

\section{Process classification for diagonal states}
In this section, we introduce the methodology to prove the Second Law for diagonal states $\xi$. Basically, for diagonal states, the evolution of the battery energy distribution is given by the Markov chain, which, according to standard terminology, can be \emph{transient}, \emph{positive-recurrent} or \emph{null-recurrent}. This fits the classification based on particular observation made in the section before. Markov chain theory enables us to introduce the notions of \emph{transience} or \emph{positive/null-recurrence}, as described in the previous section, in a more rigorous way, at least for diagonal states $\xi$ (see Appendix \ref{markov_definitions_appendix}).

Let us then consider an elementary charging step $\rho \otimes \xi \to U \rho \otimes \xi U^\dag$ for $d$-dimensional diagonal state $\xi$ represented by the vector of probabilities $(s_1,s_2, \dots, s_d)$, and an initial distribution of the battery given by probabilities $p_k$ (for $k \in \mathbb{N})$. Strict energy-preservation of the unitary enforces the block-diagonal structure $U=\bigoplus_{i=1}^{\infty} u^{(i)}$ (see Appendix \ref{charging_process_appendix}).
Then, the new energy distribution (after the charging step) is given by:
\begin{eqnarray} \label{Markov_process}
   p_k' = \sum_{m=1}^\infty T_{km} p_m
\end{eqnarray}
where 
\begin{eqnarray} \label{transition_matrix}
   T_{km} = \sum_{n=1}^\infty \sum_{i,j = 1}^{\min(n,d)}  |u^{(n)}_{ij}|^2 s_j \ \delta_{n-j+1,m} \ \delta_{n-i+1,k},
\end{eqnarray}
and $u_{ij}^{(n)}$ are the elements of an arbitrary unitary (in particular, it implies $\sum_i |u_{ij}^{(n)}|^2 = \sum_j |u_{ij}^{(n)}|^2 = 1$). For the incoherent state $\xi$, the diagonal part of the battery transforms independently of the off-diagonal elements (coherences). In accordance, we can identify the diagonal evolution with the stochastic process, namely the Markov chain defined on the space of energy levels (natural numbers). From now on, the main object of our interest is the transition matrix $T$ (Eq. \eqref{transition_matrix}) of that process. Elements $T{i,j}$ of the matrix set probabilities for transitions between energy levels of the battery, as depicted in Fig. \ref{TFig}.
\begin{figure}
    \includegraphics[width=0.5\textwidth]{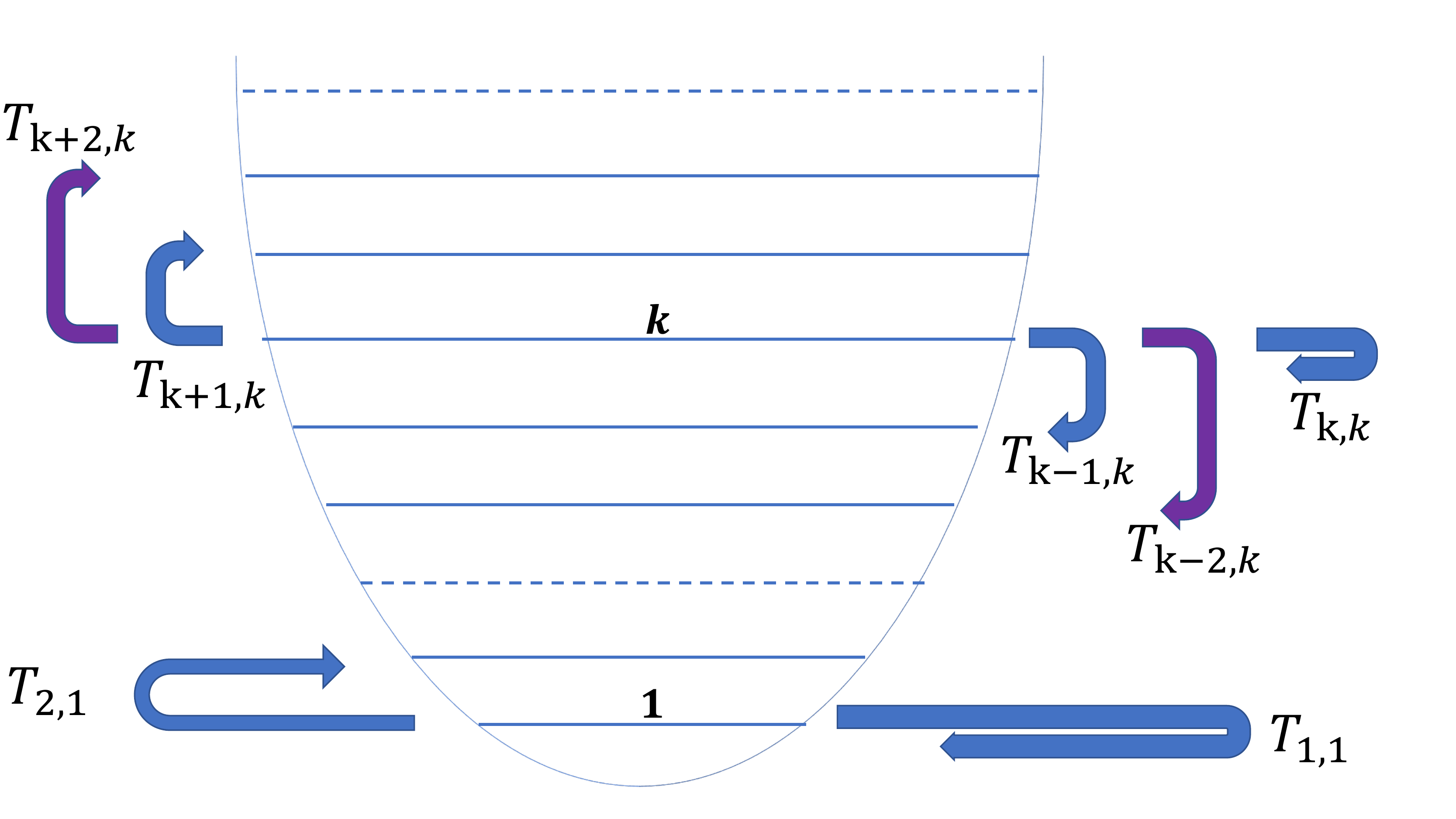}
\caption{Action of elements of the transition matrix $T$ on populations of the harmonic oscillator battery. For the interaction with TLS in the collisional model, only nearest neighbour transitions are allowed (blue arrows) within one collision. Possible transitions are restricted from the bottom by the existence of the ground state.\label{TFig}}
\end{figure}
Then, according to the theory of Markov chains we define:

\begin{definition}[\emph{Transient, positive- and null-recurrent state}] \label{transient_and_recurrent_definition}
We define the \emph{expected return time} for the Markov chain with transition matrix $T$ as 
\begin{eqnarray}
   \langle \tau_k \rangle = \sum_{n=1}^\infty n \ T^n_{kk}.
\end{eqnarray}
Then, the $k$-th state of the chain is
\begin{align}
    \text{transient:} \quad &\forall_n \ T^n_{kk} < 1; \\
    \text{positive-recurrent:} \quad  &\exists_n  \ T^n_{kk} = 1 \quad \text{and} \quad \langle \tau_k \rangle < \infty; \\
    \text{null-recurrent:} \quad &\exists_n  \ T^n_{kk} = 1 \quad \text{and} \quad \langle \tau_k \rangle = \infty.
\end{align}   
\end{definition}
Let us first point out the most essential conclusions coming from this classification:
\begin{enumerate}
    \item If the Markov chain is irreducible, then all of its states are of the same type (i.e., recurrent or transient). Thus, we characterize the irreducible chains as recurrent or transient.
    \item The positive-recurrent chain has a unique stationary distribution, such that:
   $p_k = \sum_{m=1}^\infty T_{km} p_m.$
    \item There is no stationary distribution for transient or null-recurrent chains.
    \item For finite dimensional spaces, all states are positive-recurrent.
\end{enumerate}
In the following, we will prove that the processes plotted in Fig. \ref{fig:battery_distribution} are indeed transient and positive/null-recurrent with respect to the above classification. 
In particular we see from the Markov chain theory that the transient and null-recurrent processes are non-stationary, whereas the positive-recurrent is the only one stationary. From the physical point of view, we imagine that the repeated charging via the positive-recurrent process drags the state of the battery to the stationary solution and the charging has to saturate at some point (compare with the green distribution in Fig. \ref{fig:battery_distribution}). On the other hand, the transient and null-recurrent are the processes that describes the cyclic charging, but as we see from Definition \ref{transient_and_recurrent_definition} and based on the previous observations, they are very distinct in their asymptotic properties. We believe that only transient chain represents the proper charging of the battery which is accompanied with the ``indefinite increase of the ergotropy", and show it below for the TLS case.

One technical assumption has to be mentioned. In the following, we assume that the charging process is irreducible, which means that all energy states are connected, such that after some particular number of steps, there is a non-zero probability of jumping between two arbitrary states (see Appendix \ref{markov_definitions_appendix} for more details). The reason behind this is to capture the subtleties connected with the potentially infinite charging process. Otherwise, an infinite space of the harmonic oscillator is reduced to a finite-dimensional, and the process always tends to stationary distribution (as we see from remark 4).

\subsection{Two-level systems}
For the charging systems given by two-level systems, we propose the following theorem:
\begin{theorem} \label{theorem_recurrent}
The irreducible Markov chain with the transition matrix given by Eq. \eqref{transition_matrix} for the two-dimensional system with probabilities $(s_1, s_2)$ is positive-recurrent if $s_1 > s_2$ and null-recurrent if $s_1 = s_2$. 
\end{theorem}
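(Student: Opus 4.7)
The plan is to recognize that for a two-level $\xi$ the transition matrix of Eq.~\eqref{transition_matrix} reduces to a birth-death chain on the battery levels $\{1,2,\dots\}$ and then apply the classical recurrence and invariant-measure criteria for such chains. Strict energy preservation forces each non-trivial block $u^{(n)}$ ($n\geq 2$) to act on the two-dimensional subspace $\{\ket{n-1,e},\ket{n,g}\}$; set $a_n := |u^{(n)}_{12}|^2 \in [0,1]$. Substituting $\xi = \mathrm{diag}(s_1,s_2)$ into Eq.~\eqref{transition_matrix} and tracking which block each term belongs to gives the only nonzero off-diagonal transitions
\[
T_{k-1,k} = s_1\, a_k \quad (k\geq 2), \qquad T_{k+1,k} = s_2\, a_{k+1},
\]
together with $T_{k,k} = 1 - T_{k-1,k} - T_{k+1,k}$, so the chain is nearest-neighbor; irreducibility (an assumption of the theorem) is equivalent to $s_1,s_2 > 0$ and $a_n > 0$ for every $n\geq 2$.

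Setting $p_k := s_2 a_{k+1}$ (birth) and $q_k := s_1 a_k$ (death, with $q_1 = 0$), I would then invoke the standard dichotomy for birth-death chains on $\mathbb{N}$: the chain is recurrent iff $\sum_{n\geq 2}\prod_{k=2}^{n} q_k/p_k = \infty$, and a recurrent chain is positive-recurrent iff its invariant measure $\pi_n = \pi_1\prod_{k=2}^n p_{k-1}/q_k$ is summable. The main algebraic observation is the telescoping
\[
\prod_{k=2}^{n}\frac{q_k}{p_k} = \left(\frac{s_1}{s_2}\right)^{\!n-1}\!\frac{a_2}{a_{n+1}}, \qquad \prod_{k=2}^{n}\frac{p_{k-1}}{q_k} = \left(\frac{s_2}{s_1}\right)^{\!n-1},
\]
where the second identity is \emph{independent} of $U$: every factor $a_k$ in $p_{k-1}=s_2 a_k$ cancels the identical factor in $q_k = s_1 a_k$.

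Plugging into the criteria closes the argument. For both $s_1 > s_2$ and $s_1 = s_2$ one has $(s_1/s_2)^{n-1}\geq 1$ and $a_{n+1}\leq 1$, hence $\prod_{k=2}^n(q_k/p_k) \geq a_2 > 0$, so the recurrence series diverges and the chain is recurrent in both regimes. The invariant-measure sum collapses to the geometric series $\sum_{n\geq 1}(s_2/s_1)^{n-1}$, which is finite for $s_1 > s_2$ (positive recurrence) and infinite for $s_1 = s_2$ (null recurrence by elimination). The conceptual content beyond standard Markov-chain bookkeeping is concentrated in the cancellation $p_{k-1}/q_k = s_2/s_1$: this is what makes the classification independent of the particular energy-conserving unitary and depend only on the passivity parameters of $\xi$. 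Handling the boundary at state $1$ ($q_1 = 0$) and the index shift between Eq.~\eqref{transition_matrix} and the birth-death parameters are the only bookkeeping items that require care.
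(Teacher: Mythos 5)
Your proposal is correct and reaches the same classification as the paper, but it is packaged differently. You reduce Eq.~\eqref{transition_matrix} to a birth--death chain with up/down rates $p_k=s_2\alpha_{k+1}$, $q_k=s_1\alpha_k$ (your $a_n$ is the paper's $\alpha_n$) and then invoke the two classical criteria for such chains: divergence of $\sum_n\prod_{k=2}^{n}q_k/p_k$ for recurrence, and summability of the detailed-balance invariant measure $\pi_n=\pi_1\prod_{k=2}^{n}p_{k-1}/q_k$ for positive recurrence. The paper instead proves recurrence for $s_1\ge s_2$ via Foster's theorem (Proposition~\ref{reccurent_transient_conditions}), explicitly constructing the Lyapunov function $f_n=1+\alpha_2\sum_{k=2}^{n-1}\alpha_{k+1}^{-1}(s_1/s_2)^{k-1}$, and then separates positive from null recurrence by solving the stationary equations, which give $p_{k+1}=(s_2/s_1)p_k$, normalizable only when $s_1>s_2$. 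The two routes are the same computation in different clothing: your recurrence series is precisely the increment sequence of the paper's Lyapunov function (the scale function of the chain), and your invariant measure is precisely the paper's stationary recursion. What your version buys is economy and transparency---the cancellation $p_{k-1}/q_k=s_2/s_1$ makes it immediate that the stationary measure, and hence the positive/null dichotomy, is independent of the unitary (the $\alpha_k$), a point the paper only extracts after solving the balance equations. What the paper's version buys is self-containedness and extensibility: drift conditions of Foster type do not require the nearest-neighbour structure and are the tool the authors flag for the multi-level and fully quantum generalizations, where your dichotomy is unavailable. Your handling of the boundary ($q_1=0$, products starting at $k=2$) and the implicit nonnegativity of the holding probabilities $1-\alpha_m s_1-\alpha_{m+1}s_2\ge 0$ both check out, so I see no gap.
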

We refer the reader to Appendix \ref{two_level_system_appendix} for the proof, which is based on the Foster's theorem \cite{Bremaud1999}.

The Theorem \ref{theorem_recurrent}, for a diagonal TLS, simply states: $passive \ state \implies recurrent \ process$. As a corollary, we immediately get the following negation: $transient \ process \implies active \ state$. However, we stress that in general $active \ state \centernot \implies transient \ process$ since the active state provides only the possibility of a transient process, which has to be accompanied by a selection of a proper unitary. In particular, for a unitary \eqref{unitary_example} and an active state ($s_1 < s_2$), the associated process is transient. 
\begin{proposition}
For the positive-recurrent process (with $s_1 > s_2$), the stationary distribution is given by the Gibbs state, satisfying (for $k \ge 1$):
\begin{eqnarray}
   p_{k+1} = \frac{s_2}{s_1} \ p_{k},
\end{eqnarray}
i.e., the battery approaches the equilibrium distribution with the same temperature as the collection of charging TLSs. 
\end{proposition}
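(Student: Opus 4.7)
The plan is to exploit the birth--death structure of the Markov chain induced by the energy-preserving unitary acting on a harmonic-oscillator battery and a two-level system. First I would unpack the block decomposition $U = \mathbb{1} \oplus \bigoplus_{n\ge 2} u^{(n)}$, where $u^{(n)}$ acts on the two-dimensional subspace spanned by $\{\ket{n,g},\ket{n-1,e}\}$ (the only states with total energy $n-1$). Writing the $2\times 2$ block as
\[
u^{(n)}=\begin{pmatrix} a_n & b_n \\ c_n & d_n \end{pmatrix},
\]
I would then read off the transition matrix entries $T_{km}$ from Eq.~\eqref{transition_matrix}. Only nearest-neighbour transitions survive (as already illustrated in Fig.~\ref{TFig}), giving the upward rate $T_{k+1,k}=s_2|b_{k+1}|^2$ and downward rate $T_{k-1,k}=s_1|c_k|^2$, together with a diagonal self-transition whose explicit form is not needed.

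Next I would invoke the standard fact about birth--death chains: because Theorem~\ref{theorem_recurrent} guarantees a unique stationary distribution when $s_1>s_2$, and because probability can flow between levels $k$ and $k+1$ only through the single edge connecting them, the global balance $p_k=\sum_m T_{km}p_m$ collapses at each cut to the \emph{detailed balance} condition
\[
T_{k,k+1}\,p_{k+1}=T_{k+1,k}\,p_k,\qquad k\ge 1.
\]
Substituting the rates obtained above yields
\[
s_1|c_{k+1}|^2\,p_{k+1}=s_2|b_{k+1}|^2\,p_k.
\]

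The final, and genuinely nontrivial, step is to eliminate the unitary-dependent matrix elements. Here I would use the fact that each block $u^{(n)}$ is a $2\times 2$ unitary matrix, so unitarity of rows and columns forces $|b_n|^2=|c_n|^2$. This identity is what makes the result independent of the microscopic details of the charging unitary and reduces the recurrence to $p_{k+1}=(s_2/s_1)\,p_k$, i.e.\ the Gibbs distribution at the temperature of the fuel. As a small sanity check I would also verify the balance equation at the boundary $k=1$ separately, using the $1\times 1$ block on $\ket{1,g}$ together with $s_1+s_2=1$; the cancellation works identically and confirms that the formula is valid already from $k=1$. The only conceptual subtlety is justifying detailed balance, but this is automatic for an irreducible birth--death chain that admits any stationary distribution, so no real obstacle appears once the nearest-neighbour structure is established.
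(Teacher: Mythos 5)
Your proof is correct and takes essentially the same route as the paper: the paper writes out the stationary balance equations for the explicit nearest-neighbour transition matrix, parametrized by $\alpha_n=|u^{(n)}_{12}|^2=|u^{(n)}_{21}|^2$ (the same symmetry you derive from unitarity of the $2\times2$ blocks), and telescopes them, which is precisely your detailed-balance-at-cuts argument. The only point worth stating explicitly is that cancelling $|c_{k+1}|^2=|b_{k+1}|^2=\alpha_{k+1}$ from both sides requires $\alpha_{k+1}\neq 0$, which is supplied by the irreducibility assumption.
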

One important remark must be pointed out, namely, that the stationary state for the positive-recurrent process does not depend on the applied form of a unitary protocol but solely on the TLS state. In particular, it means that for a strictly-passive TLS, one can change the unitary between the charging steps, yet the battery still approaches the unique stationary state given by the Gibbs distribution. As we will see in the following subsection, it is not valid for multi-level systems, for which the stationary state depends on the unitary process.      

\subsection{Multi-level systems}
For multi-level systems (i.e., with dimension $d>2$), the situation complicates drastically, so we could not provide any general proof for the charging classification. Nevertheless, we provide a numerical simulation to support our conjectures for higher dimensional systems. In particular, we sample over a random transition matrices of the form:
\begin{eqnarray} \label{quasi_transition_matrix}
   \tilde T_{km} = \sum_{n=1}^\infty \sum_{i,j = 1}^{\min(n,d)}  B^{(n)}_{ij} s_j \ \delta_{n-j+1,m} \ \delta_{n-i+1,k},
\end{eqnarray}
where $B^{(n)}_{ij}$ are the elements of random bistochastic matrices. We stress that the set of all matrices $T$, represented by the elements given by Eq. \eqref{transition_matrix}, forms a subset of all transition matrices $\tilde T$ since the former is constructed out of the unistochastic matrices (with elements $|u_{ij}^{(n)}|^2$) instead of more general bistochastic ones. Nevertheless, since we enlarge the class of the charging processes (for better efficiency of the numerical simulation), our general conclusions are still valid for those originating from the unitary evolution.

We want to highlight the most crucial observations. Primarily, we have simulated the change of the battery's ergotropy for the repeated charging process with the random transition matrix \eqref{quasi_transition_matrix} and the random state of a five-dimensional charging system. Among all the sampled realizations, we have never observed the trajectory that contradicts the Second Law implication:  $indefinite \  increase \ of \ the \ ergotropy  \implies active \ state$. As an illustration, in the left panel of Fig. \ref{fig:stationary_states}, we present a sample of twenty trajectories that support this claim. 

The second general observation is the following: \emph{The battery charged via strictly-passive states leads to the passive stationary state (but not necessarily the Gibbs state).} It is observed from the discussed before random trajectories where it is seen that the ergotropy tends to zero for passive states. However, contrary to the two-level systems, where the stationary state was uniquely defined as a Gibbs state (since any passive TLS is necessarily also a Gibbs state), for higher dimensions, it is given by a more general passive state. Moreover, as we present in the right panel of Fig. \ref{fig:stationary_states}, this stationary passive state also depends on the form of the applied transition matrix.    


\begin{figure}
    \includegraphics[width=0.47\textwidth]{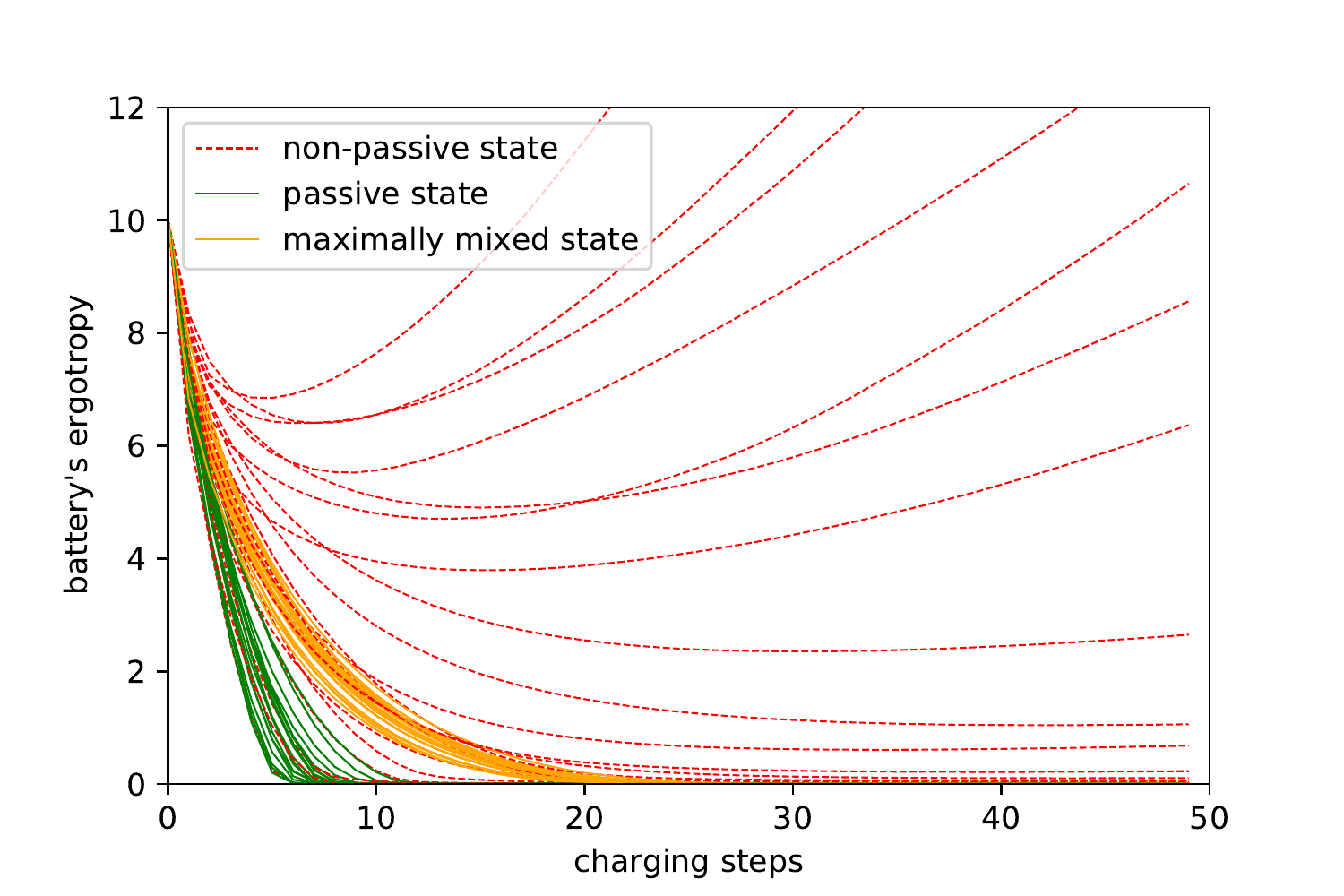}
    \includegraphics[width=0.47\textwidth]{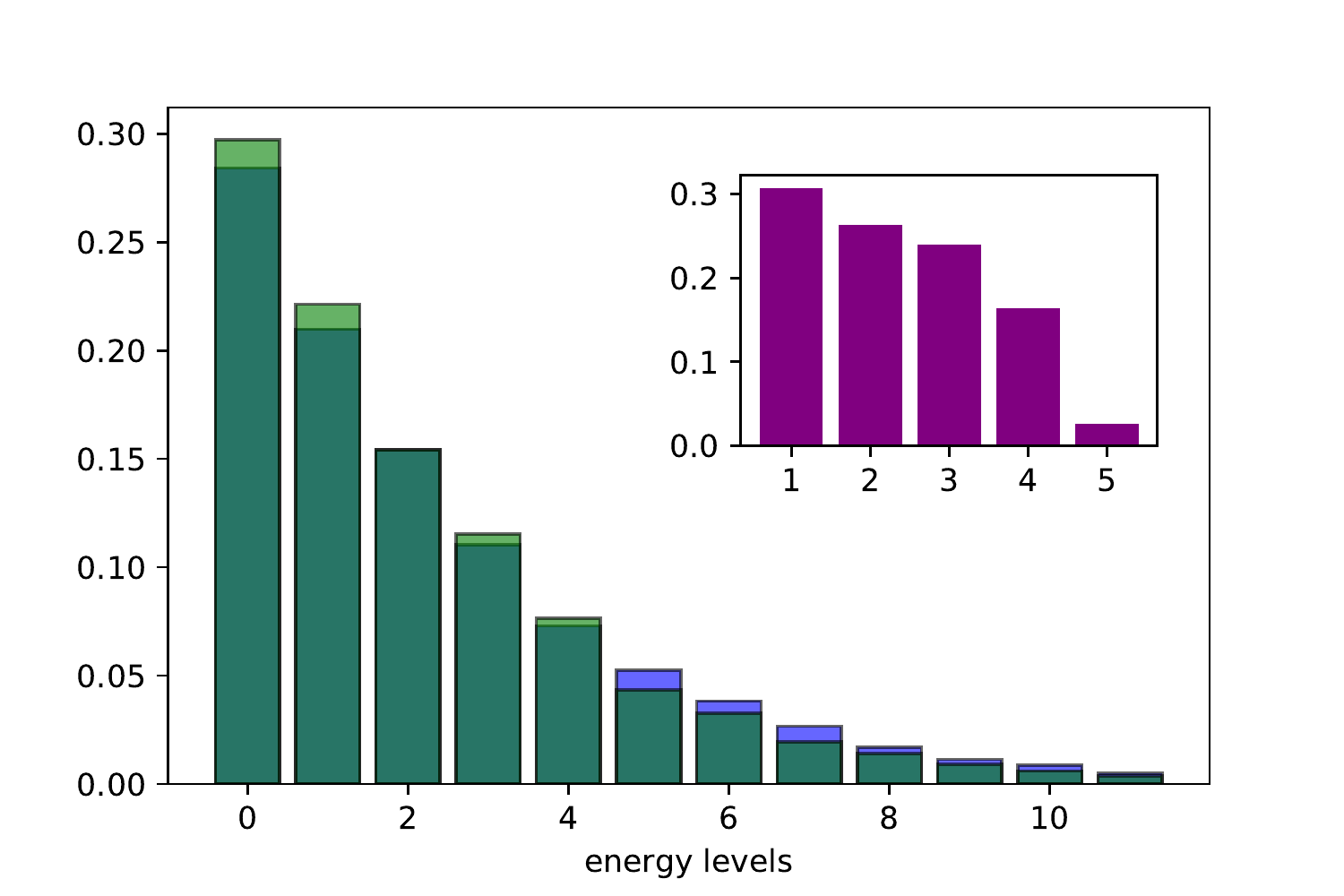}
\caption{\emph{Battery charging through collisions with five-level systems.} \textbf{Left}: Trajectories of battery's ergotropy for random transition matrices (\ref{quasi_transition_matrix}), and random charging states (grouped as: passive -- green, non-passive -- red, and maximally mixed -- orange). Initial state of the battery is fixed and pure. Consecutive charging with passive and maximally mixed states results in erasure of battery ergotropy, while some trajectories for non-passive states show either saturation or indefinite increase of ergotropy. \textbf{Right}: Histogram of two distinct stationary distributions of battery's population, resulting from application of two different transition matrices (\ref{quasi_transition_matrix}). Light green and blue rectangles mark excesses of respective distributions. Inset shows population distribution of the common passive state used for the charging.  } \label{fig:stationary_states}
\end{figure}

\section{Outlook and Conclusions}
A closer analysis of the fully quantum charging process for a generic battery (i.e., without any additional assumptions, like translational symmetry) shows that the Second Law cannot be formulated for a single cycle but only emerges in the asymptotic limit. This reveals the true meaning of the ``cyclicity'' in Planck's formulation, which involves not only the working body but also the work reservoir. In accordance, we postulate the Second Law as inability of the indefinite increase of the battery's ergotropy via the repeated charging by passive states. 

In this paper, we open a broad program to prove the introduced Second Law and characterize the charging process in general. The introduced methods related to the theory of Markov chains suggest that the general characterization of the charging can be done without any details of the applied operation but is solely based on the passivity/non-passivity of the charging fuel. Nevertheless, our insight into the problem is just the tip of the iceberg. Let us briefly point out the open questions within this paradigm. 

Firstly, we only touch on the problem for diagonal states, which very nicely corresponds to the theory of Markov chains. However, even for this semi-classical problem, the general classification of the processes has been done only for the lowest dimensional system (whereas for multi-level systems, we provide numerical observations). Moreover, as far as the classification of transient and positive/null-recurrent Markov chains expresses the intuitions in  mathematical terms, 
we do not provide a general proof that the transient process leads to an indefinite increase of the ergotropy, whereas the null-recurrent does not. Moreover, for higher dimensional systems, the situation even for positive-recurrent processes is much more complicated since the stationary distribution now depends on the form of the unitary operator. The crucial observation is that battery in general does not approach the Gibbs state, but rather a more general - passive state. However, this also requires a general proof. 

Secondly, the whole framework should be generalized for non-diagonal quantum states. In this case, we need to include the coherent contribution to the battery's ergotropy. Consequently, we cannot restrict the analysis only to the diagonal part of the battery density matrix. Moreover, while the introduced concepts (i.e., transience and null/positive-recurrence) have a  well defined meaning for the diagonal case,  we cannot 
represent evolution as a stochastic process for a general quantum scenario; therefore, we cannot directly use the definitions from the Markov's theory. Thus, a very interesting mathematical question arises: What is the meaning of the transience/recurrence in a fully quantum scenario? A hint on how to approach it may be delivered by Foster's theorem \cite{Bremaud1999}, which provides equivalent conditions for transience/recurrence that can still be generalized for generators acting on the Hilbert space. This, however, needs further careful investigations.

\subsection*{Acknowledgements}
The authors thank Konrad Ja{\l}owiecki for providing us with a code to simulate the random transition matrices. We acknowledge support from the Foundation for Polish Science through IRAP project co-financed by EU within the Smart Growth Operational Programme (contract no.2018/MAB/5).

\bibliography{bib}

\begin{thebibliography}{22}%
\makeatletter
\providecommand \@ifxundefined [1]{%
 \@ifx{#1\undefined}
}%
\providecommand \@ifnum [1]{%
 \ifnum #1\expandafter \@firstoftwo
 \else \expandafter \@secondoftwo
 \fi
}%
\providecommand \@ifx [1]{%
 \ifx #1\expandafter \@firstoftwo
 \else \expandafter \@secondoftwo
 \fi
}%
\providecommand \natexlab [1]{#1}%
\providecommand \enquote  [1]{``#1''}%
\providecommand \bibnamefont  [1]{#1}%
\providecommand \bibfnamefont [1]{#1}%
\providecommand \citenamefont [1]{#1}%
\providecommand \href@noop [0]{\@secondoftwo}%
\providecommand \href [0]{\begingroup \@sanitize@url \@href}%
\providecommand \@href[1]{\@@startlink{#1}\@@href}%
\providecommand \@@href[1]{\endgroup#1\@@endlink}%
\providecommand \@sanitize@url [0]{\catcode `\\12\catcode `\$12\catcode
  `\&12\catcode `\#12\catcode `\^12\catcode `\_12\catcode `\%12\relax}%
\providecommand \@@startlink[1]{}%
\providecommand \@@endlink[0]{}%
\providecommand \url  [0]{\begingroup\@sanitize@url \@url }%
\providecommand \@url [1]{\endgroup\@href {#1}{\urlprefix }}%
\providecommand \urlprefix  [0]{URL }%
\providecommand \Eprint [0]{\href }%
\providecommand \doibase [0]{http://dx.doi.org/}%
\providecommand \selectlanguage [0]{\@gobble}%
\providecommand \bibinfo  [0]{\@secondoftwo}%
\providecommand \bibfield  [0]{\@secondoftwo}%
\providecommand \translation [1]{[#1]}%
\providecommand \BibitemOpen [0]{}%
\providecommand \bibitemStop [0]{}%
\providecommand \bibitemNoStop [0]{.\EOS\space}%
\providecommand \EOS [0]{\spacefactor3000\relax}%
\providecommand \BibitemShut  [1]{\csname bibitem#1\endcsname}%
\let\auto@bib@innerbib\@empty
\bibitem [{\citenamefont {Allahverdyan}\ \emph {et~al.}(2004)\citenamefont
  {Allahverdyan}, \citenamefont {Balian},\ and\ \citenamefont
  {Nieuwenhuizen}}]{Allahverdyan2004}%
  \BibitemOpen
  \bibfield  {author} {\bibinfo {author} {\bibfnamefont {A.~E}\ \bibnamefont
  {Allahverdyan}}, \bibinfo {author} {\bibfnamefont {R}~\bibnamefont {Balian}},
  \ and\ \bibinfo {author} {\bibfnamefont {Th.~M}\ \bibnamefont
  {Nieuwenhuizen}},\ }\bibfield  {title} {\enquote {\bibinfo {title} {Maximal
  work extraction from finite quantum systems},}\ }\href {\doibase
  10.1209/epl/i2004-10101-2} {\bibfield  {journal} {\bibinfo  {journal}
  {Europhysics Letters ({EPL})}\ }\textbf {\bibinfo {volume} {67}},\ \bibinfo
  {pages} {565--571} (\bibinfo {year} {2004})}\BibitemShut {NoStop}%
\bibitem [{\citenamefont {Pusz}\ and\ \citenamefont
  {Woronowicz}(1978)}]{Pusz1978}%
  \BibitemOpen
  \bibfield  {author} {\bibinfo {author} {\bibfnamefont {W.}~\bibnamefont
  {Pusz}}\ and\ \bibinfo {author} {\bibfnamefont {S.~L.}\ \bibnamefont
  {Woronowicz}},\ }\bibfield  {title} {\enquote {\bibinfo {title} {Passive
  states and kms states for general quantum systems},}\ }\href {\doibase
  https://doi.org/10.1007/BF01614224} {\bibfield  {journal} {\bibinfo
  {journal} {Comm. Math. Phys.}\ }\textbf {\bibinfo {volume} {58}},\ \bibinfo
  {pages} {273--290} (\bibinfo {year} {1978})}\BibitemShut {NoStop}%
\bibitem [{\citenamefont {Biswas}\ \emph {et~al.}(2022)\citenamefont {Biswas},
  \citenamefont {Łobejko}, \citenamefont {Mazurek}, \citenamefont
  {Jałowiecki},\ and\ \citenamefont {Horodecki}}]{Tanmoy2022}%
  \BibitemOpen
  \bibfield  {author} {\bibinfo {author} {\bibfnamefont {Tanmoy}\ \bibnamefont
  {Biswas}}, \bibinfo {author} {\bibfnamefont {Marcin}\ \bibnamefont
  {Łobejko}}, \bibinfo {author} {\bibfnamefont {Paweł}\ \bibnamefont
  {Mazurek}}, \bibinfo {author} {\bibfnamefont {Konrad}\ \bibnamefont
  {Jałowiecki}}, \ and\ \bibinfo {author} {\bibfnamefont {Michał}\
  \bibnamefont {Horodecki}},\ }\href {\doibase 10.48550/ARXIV.2205.06455}
  {\enquote {\bibinfo {title} {Extraction of ergotropy: free energy bound and
  application to open cycle engines},}\ } (\bibinfo {year} {2022})\BibitemShut
  {NoStop}%
\bibitem [{\citenamefont {Brunner}\ \emph {et~al.}(2012)\citenamefont
  {Brunner}, \citenamefont {Linden}, \citenamefont {Popescu},\ and\
  \citenamefont {Skrzypczyk}}]{Brunner2012}%
  \BibitemOpen
  \bibfield  {author} {\bibinfo {author} {\bibfnamefont {Nicolas}\ \bibnamefont
  {Brunner}}, \bibinfo {author} {\bibfnamefont {Noah}\ \bibnamefont {Linden}},
  \bibinfo {author} {\bibfnamefont {Sandu}\ \bibnamefont {Popescu}}, \ and\
  \bibinfo {author} {\bibfnamefont {Paul}\ \bibnamefont {Skrzypczyk}},\
  }\bibfield  {title} {\enquote {\bibinfo {title} {Virtual qubits, virtual
  temperatures, and the foundations of thermodynamics},}\ }\href {\doibase
  10.1103/PhysRevE.85.051117} {\bibfield  {journal} {\bibinfo  {journal} {Phys.
  Rev. E}\ }\textbf {\bibinfo {volume} {85}},\ \bibinfo {pages} {051117}
  (\bibinfo {year} {2012})}\BibitemShut {NoStop}%
\bibitem [{\citenamefont {Skrzypczyk}\ \emph {et~al.}(2014)\citenamefont
  {Skrzypczyk}, \citenamefont {Short},\ and\ \citenamefont
  {Popescu}}]{Skrzypczyk2014}%
  \BibitemOpen
  \bibfield  {author} {\bibinfo {author} {\bibfnamefont {Paul}\ \bibnamefont
  {Skrzypczyk}}, \bibinfo {author} {\bibfnamefont {Anthony~J.}\ \bibnamefont
  {Short}}, \ and\ \bibinfo {author} {\bibfnamefont {Sandu}\ \bibnamefont
  {Popescu}},\ }\bibfield  {title} {\enquote {\bibinfo {title} {Work extraction
  and thermodynamics for individual quantum systems},}\ }\href {\doibase
  10.1038/ncomms5185} {\bibfield  {journal} {\bibinfo  {journal} {Nature
  Communications}\ }\textbf {\bibinfo {volume} {5}},\ \bibinfo {pages} {4185}
  (\bibinfo {year} {2014})}\BibitemShut {NoStop}%
\bibitem [{\citenamefont {Alhambra}\ \emph {et~al.}(2016)\citenamefont
  {Alhambra}, \citenamefont {Masanes}, \citenamefont {Oppenheim},\ and\
  \citenamefont {Perry}}]{Alhambra2016}%
  \BibitemOpen
  \bibfield  {author} {\bibinfo {author} {\bibfnamefont {\'Alvaro~M.}\
  \bibnamefont {Alhambra}}, \bibinfo {author} {\bibfnamefont {Lluis}\
  \bibnamefont {Masanes}}, \bibinfo {author} {\bibfnamefont {Jonathan}\
  \bibnamefont {Oppenheim}}, \ and\ \bibinfo {author} {\bibfnamefont
  {Christopher}\ \bibnamefont {Perry}},\ }\bibfield  {title} {\enquote
  {\bibinfo {title} {Fluctuating work: From quantum thermodynamical identities
  to a second law equality},}\ }\href {\doibase 10.1103/PhysRevX.6.041017}
  {\bibfield  {journal} {\bibinfo  {journal} {Phys. Rev. X}\ }\textbf {\bibinfo
  {volume} {6}},\ \bibinfo {pages} {041017} (\bibinfo {year}
  {2016})}\BibitemShut {NoStop}%
\bibitem [{\citenamefont {{\L{}}obejko}\ \emph {et~al.}(2020)\citenamefont
  {{\L{}}obejko}, \citenamefont {Mazurek},\ and\ \citenamefont
  {Horodecki}}]{Lobejko2020}%
  \BibitemOpen
  \bibfield  {author} {\bibinfo {author} {\bibfnamefont {Marcin}\ \bibnamefont
  {{\L{}}obejko}}, \bibinfo {author} {\bibfnamefont {Pawe{\l{}}}\ \bibnamefont
  {Mazurek}}, \ and\ \bibinfo {author} {\bibfnamefont {Micha{\l{}}}\
  \bibnamefont {Horodecki}},\ }\bibfield  {title} {\enquote {\bibinfo {title}
  {Thermodynamics of {M}inimal {C}oupling {Q}uantum {H}eat {E}ngines},}\ }\href
  {\doibase 10.22331/q-2020-12-23-375} {\bibfield  {journal} {\bibinfo
  {journal} {{Quantum}}\ }\textbf {\bibinfo {volume} {4}},\ \bibinfo {pages}
  {375} (\bibinfo {year} {2020})}\BibitemShut {NoStop}%
\bibitem [{\citenamefont {{\L}obejko}(2021)}]{Lobejko2021}%
  \BibitemOpen
  \bibfield  {author} {\bibinfo {author} {\bibfnamefont {Marcin}\ \bibnamefont
  {{\L}obejko}},\ }\bibfield  {title} {\enquote {\bibinfo {title} {The tight
  second law inequality for coherent quantum systems and finite-size heat
  baths},}\ }\href {\doibase 10.1038/s41467-021-21140-4} {\bibfield  {journal}
  {\bibinfo  {journal} {Nature Communications}\ }\textbf {\bibinfo {volume}
  {12}},\ \bibinfo {pages} {918} (\bibinfo {year} {2021})}\BibitemShut
  {NoStop}%
\bibitem [{\citenamefont {{\L{}}obejko}(2022)}]{Lobejko2022}%
  \BibitemOpen
  \bibfield  {author} {\bibinfo {author} {\bibfnamefont {Marcin}\ \bibnamefont
  {{\L{}}obejko}},\ }\bibfield  {title} {\enquote {\bibinfo {title} {Work and
  {F}luctuations: {C}oherent vs. {I}ncoherent {E}rgotropy {E}xtraction},}\
  }\href {\doibase 10.22331/q-2022-07-14-762} {\bibfield  {journal} {\bibinfo
  {journal} {{Quantum}}\ }\textbf {\bibinfo {volume} {6}},\ \bibinfo {pages}
  {762} (\bibinfo {year} {2022})}\BibitemShut {NoStop}%
\bibitem [{\citenamefont {Jarzynski}(1997)}]{Jarzynski1997}%
  \BibitemOpen
  \bibfield  {author} {\bibinfo {author} {\bibfnamefont {C.}~\bibnamefont
  {Jarzynski}},\ }\bibfield  {title} {\enquote {\bibinfo {title}
  {Nonequilibrium equality for free energy differences},}\ }\href {\doibase
  10.1103/PhysRevLett.78.2690} {\bibfield  {journal} {\bibinfo  {journal}
  {Phys. Rev. Lett.}\ }\textbf {\bibinfo {volume} {78}},\ \bibinfo {pages}
  {2690--2693} (\bibinfo {year} {1997})}\BibitemShut {NoStop}%
\bibitem [{\citenamefont {Campisi}\ \emph {et~al.}(2011)\citenamefont
  {Campisi}, \citenamefont {H\"anggi},\ and\ \citenamefont
  {Talkner}}]{Campisi2011}%
  \BibitemOpen
  \bibfield  {author} {\bibinfo {author} {\bibfnamefont {Michele}\ \bibnamefont
  {Campisi}}, \bibinfo {author} {\bibfnamefont {Peter}\ \bibnamefont
  {H\"anggi}}, \ and\ \bibinfo {author} {\bibfnamefont {Peter}\ \bibnamefont
  {Talkner}},\ }\bibfield  {title} {\enquote {\bibinfo {title} {Colloquium:
  Quantum fluctuation relations: Foundations and applications},}\ }\href
  {\doibase 10.1103/RevModPhys.83.771} {\bibfield  {journal} {\bibinfo
  {journal} {Rev. Mod. Phys.}\ }\textbf {\bibinfo {volume} {83}},\ \bibinfo
  {pages} {771--791} (\bibinfo {year} {2011})}\BibitemShut {NoStop}%
\bibitem [{\citenamefont {Esposito}\ \emph {et~al.}(2009)\citenamefont
  {Esposito}, \citenamefont {Harbola},\ and\ \citenamefont
  {Mukamel}}]{Esposito2009}%
  \BibitemOpen
  \bibfield  {author} {\bibinfo {author} {\bibfnamefont {Massimiliano}\
  \bibnamefont {Esposito}}, \bibinfo {author} {\bibfnamefont {Upendra}\
  \bibnamefont {Harbola}}, \ and\ \bibinfo {author} {\bibfnamefont {Shaul}\
  \bibnamefont {Mukamel}},\ }\bibfield  {title} {\enquote {\bibinfo {title}
  {Nonequilibrium fluctuations, fluctuation theorems, and counting statistics
  in quantum systems},}\ }\href {\doibase 10.1103/RevModPhys.81.1665}
  {\bibfield  {journal} {\bibinfo  {journal} {Rev. Mod. Phys.}\ }\textbf
  {\bibinfo {volume} {81}},\ \bibinfo {pages} {1665--1702} (\bibinfo {year}
  {2009})}\BibitemShut {NoStop}%
\bibitem [{\citenamefont {Lipka-Bartosik}\ \emph
  {et~al.}(2021{\natexlab{a}})\citenamefont {Lipka-Bartosik}, \citenamefont
  {Mazurek},\ and\ \citenamefont {Horodecki}}]{Bartosik2021}%
  \BibitemOpen
  \bibfield  {author} {\bibinfo {author} {\bibfnamefont {Patryk}\ \bibnamefont
  {Lipka-Bartosik}}, \bibinfo {author} {\bibfnamefont {Pawe{\l{}}}\
  \bibnamefont {Mazurek}}, \ and\ \bibinfo {author} {\bibfnamefont
  {Micha{\l{}}}\ \bibnamefont {Horodecki}},\ }\bibfield  {title} {\enquote
  {\bibinfo {title} {Second law of thermodynamics for batteries with vacuum
  state},}\ }\href {\doibase 10.22331/q-2021-03-10-408} {\bibfield  {journal}
  {\bibinfo  {journal} {{Quantum}}\ }\textbf {\bibinfo {volume} {5}},\ \bibinfo
  {pages} {408} (\bibinfo {year} {2021}{\natexlab{a}})}\BibitemShut {NoStop}%
\bibitem [{\citenamefont {Ciccarello}\ \emph {et~al.}(2022)\citenamefont
  {Ciccarello}, \citenamefont {Lorenzo}, \citenamefont {Giovannetti},\ and\
  \citenamefont {Palma}}]{Ciccarello2021}%
  \BibitemOpen
  \bibfield  {author} {\bibinfo {author} {\bibfnamefont {Francesco}\
  \bibnamefont {Ciccarello}}, \bibinfo {author} {\bibfnamefont {Salvatore}\
  \bibnamefont {Lorenzo}}, \bibinfo {author} {\bibfnamefont {Vittorio}\
  \bibnamefont {Giovannetti}}, \ and\ \bibinfo {author} {\bibfnamefont
  {G.~Massimo}\ \bibnamefont {Palma}},\ }\bibfield  {title} {\enquote {\bibinfo
  {title} {Quantum collision models: Open system dynamics from repeated
  interactions},}\ }\href {\doibase
  https://doi.org/10.1016/j.physrep.2022.01.001} {\bibfield  {journal}
  {\bibinfo  {journal} {Physics Reports}\ }\textbf {\bibinfo {volume} {954}},\
  \bibinfo {pages} {1--70} (\bibinfo {year} {2022})},\ \bibinfo {note} {quantum
  collision models: Open system dynamics from repeated
  interactions}\BibitemShut {NoStop}%
\bibitem [{\citenamefont {Salvia}\ \emph {et~al.}(2022)\citenamefont {Salvia},
  \citenamefont {Perarnau-Llobet}, \citenamefont {Haack}, \citenamefont
  {Brunner},\ and\ \citenamefont {Nimmrichter}}]{Salvia2022}%
  \BibitemOpen
  \bibfield  {author} {\bibinfo {author} {\bibfnamefont {Raffaele}\
  \bibnamefont {Salvia}}, \bibinfo {author} {\bibfnamefont {Martí}\
  \bibnamefont {Perarnau-Llobet}}, \bibinfo {author} {\bibfnamefont
  {Géraldine}\ \bibnamefont {Haack}}, \bibinfo {author} {\bibfnamefont
  {Nicolas}\ \bibnamefont {Brunner}}, \ and\ \bibinfo {author} {\bibfnamefont
  {Stefan}\ \bibnamefont {Nimmrichter}},\ }\href {\doibase
  10.48550/ARXIV.2205.00026} {\enquote {\bibinfo {title} {Quantum advantage in
  charging cavity and spin batteries by repeated interactions},}\ } (\bibinfo
  {year} {2022})\BibitemShut {NoStop}%
\bibitem [{\citenamefont {Seah}\ \emph {et~al.}(2021)\citenamefont {Seah},
  \citenamefont {Perarnau-Llobet}, \citenamefont {Haack}, \citenamefont
  {Brunner},\ and\ \citenamefont {Nimmrichter}}]{Seah2021}%
  \BibitemOpen
  \bibfield  {author} {\bibinfo {author} {\bibfnamefont {Stella}\ \bibnamefont
  {Seah}}, \bibinfo {author} {\bibfnamefont {Mart\'{\i}}\ \bibnamefont
  {Perarnau-Llobet}}, \bibinfo {author} {\bibfnamefont {G\'eraldine}\
  \bibnamefont {Haack}}, \bibinfo {author} {\bibfnamefont {Nicolas}\
  \bibnamefont {Brunner}}, \ and\ \bibinfo {author} {\bibfnamefont {Stefan}\
  \bibnamefont {Nimmrichter}},\ }\bibfield  {title} {\enquote {\bibinfo {title}
  {Quantum speed-up in collisional battery charging},}\ }\href {\doibase
  10.1103/PhysRevLett.127.100601} {\bibfield  {journal} {\bibinfo  {journal}
  {Phys. Rev. Lett.}\ }\textbf {\bibinfo {volume} {127}},\ \bibinfo {pages}
  {100601} (\bibinfo {year} {2021})}\BibitemShut {NoStop}%
\bibitem [{\citenamefont {Cattaneo}\ \emph {et~al.}(2021)\citenamefont
  {Cattaneo}, \citenamefont {De~Chiara}, \citenamefont {Maniscalco},
  \citenamefont {Zambrini},\ and\ \citenamefont {Giorgi}}]{Cattaneo2021}%
  \BibitemOpen
  \bibfield  {author} {\bibinfo {author} {\bibfnamefont {Marco}\ \bibnamefont
  {Cattaneo}}, \bibinfo {author} {\bibfnamefont {Gabriele}\ \bibnamefont
  {De~Chiara}}, \bibinfo {author} {\bibfnamefont {Sabrina}\ \bibnamefont
  {Maniscalco}}, \bibinfo {author} {\bibfnamefont {Roberta}\ \bibnamefont
  {Zambrini}}, \ and\ \bibinfo {author} {\bibfnamefont {Gian~Luca}\
  \bibnamefont {Giorgi}},\ }\bibfield  {title} {\enquote {\bibinfo {title}
  {Collision models can efficiently simulate any multipartite markovian quantum
  dynamics},}\ }\href {\doibase 10.1103/PhysRevLett.126.130403} {\bibfield
  {journal} {\bibinfo  {journal} {Phys. Rev. Lett.}\ }\textbf {\bibinfo
  {volume} {126}},\ \bibinfo {pages} {130403} (\bibinfo {year}
  {2021})}\BibitemShut {NoStop}%
\bibitem [{\citenamefont {Rodrigues}\ \emph {et~al.}(2019)\citenamefont
  {Rodrigues}, \citenamefont {De~Chiara}, \citenamefont {Paternostro},\ and\
  \citenamefont {Landi}}]{Rodrigues2019}%
  \BibitemOpen
  \bibfield  {author} {\bibinfo {author} {\bibfnamefont {Franklin L.~S.}\
  \bibnamefont {Rodrigues}}, \bibinfo {author} {\bibfnamefont {Gabriele}\
  \bibnamefont {De~Chiara}}, \bibinfo {author} {\bibfnamefont {Mauro}\
  \bibnamefont {Paternostro}}, \ and\ \bibinfo {author} {\bibfnamefont
  {Gabriel~T.}\ \bibnamefont {Landi}},\ }\bibfield  {title} {\enquote {\bibinfo
  {title} {Thermodynamics of weakly coherent collisional models},}\ }\href
  {\doibase 10.1103/PhysRevLett.123.140601} {\bibfield  {journal} {\bibinfo
  {journal} {Phys. Rev. Lett.}\ }\textbf {\bibinfo {volume} {123}},\ \bibinfo
  {pages} {140601} (\bibinfo {year} {2019})}\BibitemShut {NoStop}%
\bibitem [{\citenamefont {Cusumano}(2022)}]{Cusumano2022}%
  \BibitemOpen
  \bibfield  {author} {\bibinfo {author} {\bibfnamefont {Stefano}\ \bibnamefont
  {Cusumano}},\ }\bibfield  {title} {\enquote {\bibinfo {title} {Quantum
  collision models: A beginner guide},}\ }\href {\doibase 10.3390/e24091258}
  {\bibfield  {journal} {\bibinfo  {journal} {Entropy}\ }\textbf {\bibinfo
  {volume} {24}} (\bibinfo {year} {2022}),\ 10.3390/e24091258}\BibitemShut
  {NoStop}%
\bibitem [{\citenamefont {Lipka-Bartosik}\ \emph
  {et~al.}(2021{\natexlab{b}})\citenamefont {Lipka-Bartosik}, \citenamefont
  {Mazurek},\ and\ \citenamefont {Horodecki}}]{Lipka-Bartosik2021}%
  \BibitemOpen
  \bibfield  {author} {\bibinfo {author} {\bibfnamefont {Patryk}\ \bibnamefont
  {Lipka-Bartosik}}, \bibinfo {author} {\bibfnamefont {Paweł}\ \bibnamefont
  {Mazurek}}, \ and\ \bibinfo {author} {\bibfnamefont {Michał}\ \bibnamefont
  {Horodecki}},\ }\bibfield  {title} {\enquote {\bibinfo {title} {Second law of
  thermodynamics for batteries with vacuum state},}\ }\href@noop {} {\bibfield
  {journal} {\bibinfo  {journal} {Quantum}\ }\textbf {\bibinfo {volume} {5}},\
  \bibinfo {pages} {408} (\bibinfo {year} {2021}{\natexlab{b}})}\BibitemShut
  {NoStop}%
\bibitem [{Note1()}]{Note1}%
  \BibitemOpen
  \bibinfo {note} {For non-diagonal states $\xi $ we generalize the notion of
  passivity.}\BibitemShut {Stop}%
\bibitem [{\citenamefont {Br{\'e}maud}(1999)}]{Bremaud1999}%
  \BibitemOpen
  \bibfield  {author} {\bibinfo {author} {\bibfnamefont {Pierre}\ \bibnamefont
  {Br{\'e}maud}},\ }\enquote {\bibinfo {title} {Lyapunov functions and
  martingales},}\ in\ \href {\doibase 10.1007/978-1-4757-3124-8_5} {\emph
  {\bibinfo {booktitle} {Markov Chains: Gibbs Fields, Monte Carlo Simulation,
  and Queues}}}\ (\bibinfo  {publisher} {Springer New York},\ \bibinfo
  {address} {New York, NY},\ \bibinfo {year} {1999})\ pp.\ \bibinfo {pages}
  {167--193}\BibitemShut {NoStop}%
\end{thebibliography}%

\newpage
\appendix

\section{Subsequent charging process} \label{charging_process_appendix}
We consider a $d$-dimensional qudit (system $\mathcal{S}$) with the Hamiltonian $H_S = \sum_{k=1}^{d} \omega \dyad{k}_S$ and the battery $\mathcal{B}$ given by the harmonic oscillator with the Hamiltonian $H_B = \sum_{k=1}^\infty k \omega \dyad{k}_B$. The charging process of the battery after $n$ cycles is given by: 
\begin{equation}
    \rho_n = \Tr_S[U_n \ \rho_{n-1} \otimes \xi \ U_n^\dag]    
\end{equation}
where $U_n$ is the energy-conserving unitary operator, such that it satisfies $[U_n, H_S + H_B] = 0$ and $\xi$ is some fixed state of the system $\mathcal{S}$.

Any energy-conserving unitary can be represented in the block-diagonal form, i.e., 
\begin{eqnarray} \label{energy_conserving_unitary}
   U =  \sum_{n=1}^\infty \sum_{i,j = 1}^{\min(n,d)} u^{(n)}_{ij} \dyad{n+1-i}{n+1-j}_B \otimes \dyad{i}{j}_S
\end{eqnarray}
such that it is given by the direct sum $U = u^{(1)} \oplus u^{(2)} \oplus \dots$ of the unitary blocks $u^{(k)}$. The matrix representation of the unitary $U$ is given by:  
\begin{eqnarray}
   U = \begin{pmatrix}
   u_{11}^{(1)} &  &  \\
    &
    \begin{matrix}
        u_{11}^{(2)} & u_{12}^{(2)} \\
        u_{21}^{(2)} & u_{22}^{(2)}
    \end{matrix} 
    &  \\
     &  &
    \begin{matrix}
        u_{11}^{(3)} & u_{12}^{(3)} & u_{13}^{(3)} \\
        u_{21}^{(3)} & u_{22}^{(3)} & u_{23}^{(3)} \\
        u_{31}^{(3)} & u_{32}^{(3)} & u_{33}^{(3)} \\
    \end{matrix} \\
    & & & \ddots \\
    & & & & 
    \begin{matrix}
        u_{11}^{(d)} & \dots & u_{1 d}^{(d)} \\
        \vdots & \ddots & \vdots \\
        u_{d 1}^{(d)} & \dots & u_{d d}^{(d)} \\
    \end{matrix} \\
    & & & & & 
    \begin{matrix}
        u_{11}^{(d+1)} & \dots & u_{1 d}^{(d+1)} \\
        \vdots & \ddots & \vdots \\
        u_{d 1}^{(d+1)} & \dots & u_{d d}^{(d+1)} \\
    \end{matrix} \\
    & & & & & & \ddots
   \end{pmatrix}.
\end{eqnarray}
One should notice that the dimension of the first block is one (which together with unitary condition gives simply $u_{11}^{(1)} = 1$), while for other blocks it grows up to the dimension $d$ of the system.    

\subsection{Markov chain for a charging process via diagonal states}
Let us restrict our analysis only to diagonal states of the system, such that:
\begin{eqnarray}
   \xi = \sum_{k=1}^d s_k \dyad{k}_S,
\end{eqnarray}
where $(s_1, s_2, \dots, s_d)$ forms an arbitrary vector of probabilities. In accordance, we introduce the following classification of the states.

\begin{definition} [\emph{Passive state, active state and maximally mixed state}]
The state of the system is \emph{passive} if $s_k \le s_{k+1}$ for each $k = 1, \dots, d-1$. The passive system is additionally \emph{maximally mixed} if all probabilities are equal. Finally, the system is \emph{active} if it is not passive.
\end{definition}

Next, let us consider a single step of the charging process via the diagonal state $\xi$, i.e., we calculate the state of the battery after a single charging:
\begin{equation} \label{charging_process}
    \rho' = \Tr_S[U \rho \otimes \xi U^\dag],    
\end{equation}
where $\rho = \sum_{i,j} \rho_{ij} \dyad{i}{j}_B$ is some arbitrary initial state. In the following, for simplicity we put: $\dyad{i}{j}_B \otimes \dyad{m}{n}_S \equiv \dyad{i,m}{j,n}$. The unitary evolution gives us:
\begin{equation}
\begin{split}
   & U \rho \otimes \xi U^\dag = 
   \begin{gathered}
    \left(\sum_{n=1}^\infty \sum_{i,j = 1}^{\min(n,d)} u^{(n)}_{ij} \dyad{n+1-i,i}{n+1-j,j}\right)     \left(\sum_{k,l=1}^\infty \sum_{m=1}^d \rho_{kl} s_m \dyad{k,m}{l,m}  \right)  \\ 
    \left(\sum_{n'=1}^\infty \sum_{i',j' = 1}^{\min(n',d)} u^{(n')}_{i'j'} \dyad{n'+1-i',i'}{n'+1-j',j'} \right)^\dag 
   \end{gathered} \\
   &= 
   \begin{gathered}
       \left(\sum_{n=1}^\infty \sum_{i,j = 1}^{\min(n,d)} u^{(n)}_{ij} \dyad{n+1-i,i}{n+1-j,j}\right) \left(\sum_{k,l=1}^\infty \sum_{m=1}^d \rho_{kl} s_m \dyad{k,m}{l,m}  \right)  \\
       \left(\sum_{n'=1}^\infty \sum_{i',j' = 1}^{\min(n',d)} (u^{(n')}_{i'j'})^* \dyad{n'+1-j',j'}{n'+1-i',i'} \right) 
   \end{gathered} \\
&= 
\begin{gathered}
\sum_{n,n',k,l=1}^\infty \sum_{m=1}^d \sum_{i,j = 1}^{\min(n,d)} \sum_{i',j' = 1}^{\min(n',d)} u^{(n)}_{ij} \rho_{kl} s_m (u^{(n')}_{i'j'})^* 
\braket{n+1-j,j}{k,m} \braket{l,m}{n'+1-j',j'} \dyad{n+1-i,i}{n'+1-i',i'}
\end{gathered} 
\end{split}
\end{equation}
And after tracing out the system, we get:
\begin{align}
   &\Tr_S[U \rho \otimes \xi U^\dag] = \\ &=\sum_{n,n',k,l=1}^\infty \sum_{m=1}^d \sum_{i,j = 1}^{\min(n,d)} \sum_{i',j' = 1}^{\min(n',d)} u^{(n)}_{ij} \rho_{kl} s_m (u^{(n')}_{i'j'})^* \delta_{n+1-j,k} \delta_{j,m} \delta_{n'+1-j',l} \delta_{m,j'} \delta_{i,i'} \dyad{n+1-i}{n'+1-i'} \\
   &= \sum_{n,n',k,l=1}^\infty \sum_{i,j = 1}^{\min(n,d)} \sum_{i',j' = 1}^{\min(n',d)} u^{(n)}_{ij} \rho_{kl} s_j (u^{(n')}_{i'j'})^* \delta_{n+1-j,k} \delta_{n'+1-j',l} \delta_{j,j'} \delta_{i,i'} \dyad{n+1-i}{n'+1-i'} \\
   &= \sum_{n,n'=1}^\infty \sum_{i,j = 1}^{\min(n,d)} \sum_{i',j' = 1}^{\min(n',d)} u^{(n)}_{ij} \rho_{n-j+1,n'-j'+1} s_j (u^{(n')}_{i'j'})^* \delta_{j,j'} \delta_{i,i'} \dyad{n+1-i}{n'+1-i'}
\end{align}
Finally, we are interested in new set of probabilities (after the charging process) of occupying the energy states, i.e., 
\begin{align}
   p_k' = \bra{k} \Tr_S[U \rho \otimes \xi U^\dag] \ket{k} 
   &= \sum_{n,n'=1}^\infty \sum_{i,j = 1}^{\min(n,d)} \sum_{i',j' = 1}^{\min(n',d)} u^{(n)}_{ij} \rho_{n-j+1,n'-j'+1} s_j (u^{(n')}_{i'j'})^* \delta_{j,j'} \delta_{i,i'} \delta_{n-i+1,k} \delta_{n'-i'+1,k} \\
   &= \sum_{n=1}^\infty \sum_{i,j = 1}^{\min(n,d)}  |u^{(n)}_{ij}|^2 s_j p_{n-j+1}  \delta_{n-i+1,k} = \sum_{m=1}^\infty \sum_{n=1}^\infty \sum_{i,j = 1}^{\min(n,d)}  |u^{(n)}_{ij}|^2 s_j \delta_{n-j+1,m} \delta_{n-i+1,k} p_m \\
   & = \sum_{m=1}^\infty \left(\sum_{n=1}^\infty \sum_{i,j = 1}^{\min(n,d)}  |u^{(n)}_{ij}|^2 s_j \delta_{n-j+1,m} \delta_{n-i+1,k} \right) p_m  
\end{align}
The last formula describes the transformation rule of the battery diagonal part, which, importantly, transforms independently from the off-diagonal part (coherences). Thus, the expression in the bracket can be identified as the transition matrix of the stochastic process: 
\begin{eqnarray}
   T_{km} = \sum_{n=1}^\infty \sum_{i,j = 1}^{\min(n,d)}  |u^{(n)}_{ij}|^2 s_j \ \delta_{n-j+1,m} \ \delta_{n-i+1,k},
\end{eqnarray}
such that
\begin{eqnarray}
   p_k' = \sum_{m=1}^\infty T_{km} p_m.
\end{eqnarray}
\begin{proposition}[Markov chain of the charging process]
The charging process given by Eq. \eqref{charging_process} (for a diagonal $\xi$) defines a Markov chain on the space of natural numbers (energy eigenstates), such that the probability of moving from the state $m$ to the state $k$ via the single charging step is equal to $T_{km}$.
\end{proposition}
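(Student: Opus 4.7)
The plan is to read the proposition off the explicit computation that immediately precedes it, supplementing it with two short verifications. First, I would note that the displayed calculation evaluates $p'_k = \bra{k}\Tr_S[U\rho\otimes\xi U^\dagger]\ket{k}$ and ends at $p'_k = \sum_m T_{km}\, p_m$ with $T_{km}$ as in Eq.~\eqref{transition_matrix}. The key structural observation is that this expression depends only on the diagonal entries $p_m = \rho_{mm}$ of the input state and not on any coherence $\rho_{kl}$ with $k\neq l$: the Kronecker factors $\delta_{j,j'}\delta_{i,i'}$ that appear after partial trace, together with the pair of delta constraints $\delta_{n-i+1,k}\delta_{n'-i'+1,k}$, force $n=n'$ and $i=i'$, $j=j'$, which in turn forces $\rho_{n-j+1,n'-j'+1}$ to be a diagonal entry. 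This decoupling is exactly the place where the assumption that $\xi$ is diagonal is used.

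Next I would verify that $T$ is a valid stochastic kernel on $\mathbb{N}$. Nonnegativity is immediate since every term $|u^{(n)}_{ij}|^2 s_j$ is nonnegative. Column normalisation $\sum_{k\ge 1} T_{km} = 1$ for each $m$ is the substantive check: summing over $k$ removes $\delta_{n-i+1,k}$, and then $\delta_{n-j+1,m}$ fixes $n = m+j-1$, leaving
\begin{equation}
\sum_{k=1}^{\infty} T_{km} \;=\; \sum_{j=1}^{d} s_j \sum_{i=1}^{\min(m+j-1,d)} |u^{(m+j-1)}_{ij}|^2 \;=\; \sum_{j=1}^{d} s_j \;=\; 1,
\end{equation}
where the inner sum is unity by the unistochasticity of each block $u^{(n)}$, and the outer by normalisation of $\xi$.

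Finally I would argue the Markov (memoryless) property for the iterated charging. Because the same unitary is applied at every step to a fresh, independent copy of $\xi$, and because the single-step map decouples the diagonal from the off-diagonal sector (as noted above), the diagonal of $\rho_{n+1}$ is obtained from the diagonal of $\rho_n$ by the same linear stochastic map $T$. Consequently the sequence of population vectors $(p_k(n))_{k\in\mathbb{N}}$ evolves autonomously and depends on the history only through the present value, which is the defining property of a time-homogeneous Markov chain on the energy labels.

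The main (only) obstacle is bookkeeping: the block-size ceilings $\min(n,d)$ together with the shift constraints $n = k+i-1 = m+j-1$ must be tracked carefully when exchanging sums, and one should check that the index ranges are consistent when $k$ runs over all of $\mathbb{N}$ (in particular that no contribution is lost near the ground state). Beyond this, the proposition is a direct corollary of the calculation already displayed, combined with the two verifications above.
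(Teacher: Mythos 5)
Your proposal is correct and follows essentially the same route as the paper: the proposition is read off the explicit computation of $\bra{k}\Tr_S[U\rho\otimes\xi U^\dagger]\ket{k}$, with the decoupling of populations from coherences coming from the delta constraints and the stochasticity of $T$ coming from the bistochasticity of the blocks $|u^{(n)}_{ij}|^2$ together with $\sum_j s_j = 1$ (the paper states the bistochasticity remark but leaves the column-sum check implicit, which you carry out correctly). Your explicit verification of $\sum_k T_{km}=1$ and of time-homogeneity are welcome but minor additions, not a different argument.
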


Notice that since the elements $u_{ij}^{(n)}$ represents the unitary, the elements $|u_{ij}^{(n)}|^2$ forms the bistochastic matrix, such that $\sum_i |u_{ij}^{(n)}|^2 = \sum_j |u_{ij}^{(n)}|^2 = 1$.

\subsection{Properties of the Markov chains} \label{markov_definitions_appendix}
In this section we consider the Markov chain introduced before, defined on the space of natural numbers, and fully characterize by the transition matrix $T$. 

\begin{definition}[Irreducibility]
A Markov chain with transition matrix $T$ is \textit{irreducible} if for any $k,m \in \mathbb{N}$ there exist a natural number $n$ such that 
\begin{eqnarray}
   (T^n)_{km} > 0.
\end{eqnarray}
\end{definition}
This property means that all states are connected with each other, i.e., after repeating the charging process, one can reach any state (starting from arbitrary initial one). 


\begin{definition}[\emph{Transient, positive- and null-recurrent state}]
Let us define the \emph{expected return time} for the Markov chain with transition matrix $T$ as 
\begin{eqnarray}
   \langle \tau_k \rangle = \sum_{n=1}^\infty n \ T^n_{kk}.
\end{eqnarray}
Then, the $k$-th state of the chain is
\begin{align}
    \text{transient:} \quad &\forall_n \ T^n_{kk} < 1; \\
    \text{positive-recurrent:} \quad  &\exists_n  \ T^n_{kk} = 1 \quad \text{and} \quad \langle \tau_k \rangle < \infty; \\
    \text{null-recurrent:} \quad &\exists_n  \ T^n_{kk} = 1 \quad \text{and} \quad \langle \tau_k \rangle = \infty.
\end{align}   
\end{definition}

\begin{remark}
If the Markov chain is irreducible all of its states are of the same type (i.e., recurrent or transient). Thus, we characterize the irreducible chains as recurrent or transient.
\end{remark}
\begin{remark}
The positive-recurrent chain has a unique stationary distribution, such that
\begin{eqnarray}
   p_k = \sum_{m=1}^\infty T_{km} p_m.
\end{eqnarray}
There is no stationary distribution for transient or null-recurrent chains.
\end{remark}

\begin{proposition}[Recurrence and transience criterion] \label{reccurent_transient_conditions}
Let us consider an irreducible Markov chain with transition matrix $T$ and a function $f_k > 0$ (for all $k \in \mathbb{N})$, such that
\begin{eqnarray} \label{function_inequality}
  \displaystyle\mathop{\forall}_{m \in \mathbb{N} \setminus A} \sum_k T_{km} f_k \le f_m,
\end{eqnarray}
where $A \subset \mathbb{N}$ is a finite non-empty set. Then, if there exist $f_k$ such that:  
\begin{enumerate}

    \item $f_k \to \infty$ as $k \to \infty$, the chain is recurrent;
    \item $f_k < \inf_{m \in A} f_m$ for at least one $k \in \mathbb{N} \setminus A$, the chain is transient. 
\end{enumerate}  
\end{proposition}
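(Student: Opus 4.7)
The plan is to read the hypothesis as the defining inequality of a supermartingale. If $(X_n)_{n \ge 0}$ denotes the Markov chain with transition matrix $T$, then $\mathbb{E}[f(X_{n+1}) \mid X_n = m] = \sum_k T_{km} f_k$, so the condition $\sum_k T_{km} f_k \le f_m$ for $m \notin A$ says precisely that $f(X_n)$ is a supermartingale as long as the chain has not yet entered $A$. Introducing the hitting time $\tau_A := \inf\{n \ge 0 : X_n \in A\}$, the stopped process $Y_n := f(X_{n \wedge \tau_A})$ is then a nonnegative supermartingale. Both parts follow by analyzing $Y_n$ and converting statements about $\tau_A$ into statements about recurrence or transience via irreducibility.

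For part (1), I would fix any initial state $j \notin A$ and argue by contradiction, assuming $P_j(\tau_A = \infty) > 0$. Doob's convergence theorem for nonnegative supermartingales gives $Y_n \to Y_\infty$ almost surely with $Y_\infty$ finite; on $\{\tau_A = \infty\}$ this reads $f(X_n) \to Y_\infty$, and since $f_k \to \infty$, the chain must eventually lie in some finite sublevel set $F_M = \{k : f_k \le M\}$. Because $F_M \setminus A$ is finite and the chain is irreducible, there exist $L \in \mathbb{N}$ and $p > 0$ such that $P_k(X_\ell \in A \text{ for some } \ell \le L) \ge p$ for every $k \in F_M \setminus A$; iterating this uniform hitting bound via the strong Markov property makes the probability of never entering $A$ equal to zero, a contradiction. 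Hence $P_j(\tau_A < \infty) = 1$ for every $j$, and reapplying the statement at successive returns (after one step the chain is either already in $A$ or outside $A$, with finite re-entry time) shows the finite set $A$ is visited infinitely often; some state of $A$ is therefore recurrent, and irreducibility promotes this to the whole chain.

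For part (2), I would start the chain from a state $k_0 \notin A$ with $f(k_0) < c := \inf_{m \in A} f_m$. The supermartingale property applied at the bounded stopping time $\tau_A \wedge n$ yields $\mathbb{E}_{k_0}[Y_n] \le f(k_0)$, and combining $Y_n \ge 0$ with $Y_n \ge c$ on $\{\tau_A \le n\}$ gives $c \cdot P_{k_0}(\tau_A \le n) \le f(k_0)$. Letting $n \to \infty$ by continuity of probability from below produces $P_{k_0}(\tau_A < \infty) \le f(k_0)/c < 1$. A recurrent irreducible chain would, starting from $k_0$, hit the nonempty set $A$ with probability one; the strict inequality contradicts this and forces the chain to be transient.

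The most delicate step is the confinement argument inside part (1): turning the qualitative statement that $f(X_n)$ is eventually bounded on $\{\tau_A = \infty\}$ into an honest contradiction requires partitioning that event by the sublevel $M$, fixing an $M$ of positive probability, and then extracting from irreducibility a uniform lower bound on the probability of reaching $A$ in a bounded number of steps from every state of the finite set $F_M \setminus A$. Part (2) is almost entirely formal once the supermartingale has been identified; the only care point there is to stop at $\tau_A \wedge n$ rather than at $\tau_A$ directly, since $\tau_A$ need not be integrable and $f$ is unbounded, which would preclude a naive application of optional stopping.
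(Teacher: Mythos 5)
Your argument is correct, but there is nothing in the paper to compare it against: the paper states this proposition without proof, importing it verbatim from the Markov-chain literature as the standard Lyapunov-function (Foster-type) recurrence/transience criterion (the citation to Br\'emaud), and only \emph{applies} it in Appendix C to classify the two-level-system chain. What you have written is essentially the canonical textbook proof of that criterion, and it is sound. The drift hypothesis $\sum_k T_{km} f_k \le f_m$ off $A$ makes the stopped process $Y_n = f(X_{n\wedge\tau_A})$ a nonnegative supermartingale — the integrability of each $Y_n$, which you should perhaps state explicitly, follows inductively from the same inequality together with the finiteness of $f$ at the starting point; part (2) is then the optional-stopping bound $c\,P_{k_0}(\tau_A\le n)\le \mathbb{E}[Y_n]\le f_{k_0}$, and part (1) combines almost-sure convergence of the nonnegative supermartingale with the finiteness of the sublevel sets $\{k: f_k\le M\}$ and the uniform hitting estimate that irreducibility supplies on a finite set. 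You correctly isolate and handle the two genuinely delicate points — fixing a sublevel $M$ of positive probability before invoking the uniform bound and iterating via the Markov property, and stopping at $\tau_A\wedge n$ rather than at $\tau_A$, since neither $\tau_A$ nor $f$ is controlled well enough for naive optional stopping. The net effect is that you have supplied a complete, self-contained proof of a statement the paper takes on faith from the literature; this is a reasonable thing to include as a remark or appendix lemma if the authors want the paper to be self-contained.
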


\section{Two-level system} \label{two_level_system_appendix}
We consider the simplest charging process via the diagonal two-level system (TLS) characterize by the probabilities $(s_1, s_2)$. For this process the unitary $U = u^{(1)} \oplus u^{(2)} \oplus \dots $ is composed of a trivial unit matrix $u^{(1)} = 1$ and two-dimensional matrices $u^{(k)}$ for $k>1$, such that the transition matrix simplifies to:
\begin{eqnarray}
T_{km} = \sum_{n=1}^\infty \sum_{i,j = 1}^{\min(n,d)}  |u^{(n)}_{ij}|^2 s_j \ \delta_{n-j+1,m} \ \delta_{n-i+1,k} = s_1  \delta_{1,m}  \delta_{1,k} + \sum_{n=2}^\infty \sum_{i,j = 1}^{2}  |u^{(n)}_{ij}|^2 s_j \ \delta_{n-j+1,m} \ \delta_{n-i+1,k}
\end{eqnarray}
Moreover, from the bistochasticity we put:
\begin{align}
   & |u_{11}^{(n)}|^2 = |u_{22}^{(n)}|^2 \equiv 1 - \alpha_n \\
   & |u_{12}^{(n)}|^2 = |u_{21}^{(n)}|^2 \equiv \alpha_n
\end{align}
for $n>1$ and where $\alpha_n \in [0,1]$. Using this, we get:
\begin{eqnarray}
   \sum_{i,j = 1}^{2}  |u^{(n)}_{ij}|^2 s_j  \delta_{n-j+1,m} \delta_{n-i+1,k} = 
   (1 - \alpha_n) (s_1  \delta_{n,m} \delta_{n,k} + s_2  \delta_{n-1,m}  \delta_{n-1,k}) + \alpha_n (s_2  \delta_{n-1,m}  \delta_{n,k} + s_1  \delta_{n,m}  \delta_{n-1,k})
\end{eqnarray}
and finally we obtain the following transition probabilities:
\begin{eqnarray} \label{qubit_tran_matrix}
      T_{km} = 
\begin{cases}
    (1 - \alpha_2 s_2) \delta_{k,1} +  \alpha_2 s_2 \delta_{k,2}, & m=1\\
    (1 - \alpha_{m} s_1 - \alpha_{m+1} s_2)  \delta_{k,m} + \alpha_{m} s_1 \delta_{k,m-1} + \alpha_{m+1} s_2  \delta_{k,m+1}, & m>1
\end{cases}
\end{eqnarray}

\subsection{Condition for the recurrent chain\label{rec}}
\begin{theorem}
The irreducible Markov chain with the transition matrix $T$ given by Eq. \eqref{qubit_tran_matrix} is recurrent if $s_1 \ge s_2$ (i.e., the state is passive). 
\end{theorem}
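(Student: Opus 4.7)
The plan is to invoke the recurrence half of Foster's theorem, namely part~(1) of Proposition~\ref{reccurent_transient_conditions}, by exhibiting a positive, diverging Lyapunov function $f_k$ tailored to the birth--death structure of \eqref{qubit_tran_matrix}. The starting observation is that from state $m>1$ the chain moves only to $m-1$, $m$, or $m+1$, with probabilities $\alpha_m s_1$, $1-\alpha_m s_1-\alpha_{m+1} s_2$, and $\alpha_{m+1} s_2$ respectively, so after cancellation the Foster condition $\sum_k T_{km}f_k\le f_m$ for $m>1$ collapses to the one-line inequality
\begin{equation*}
\alpha_{m+1}\,s_2\,(f_{m+1}-f_m)\;\le\;\alpha_m\,s_1\,(f_m-f_{m-1}).
\end{equation*}
A natural strategy is therefore to set $A=\{1\}$ and construct $f_k$ so that this holds with \emph{equality} on the entire tail.

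Writing $a_m:=f_{m+1}-f_m$, the equality version gives the one-step recursion $a_m=(\alpha_m s_1)/(\alpha_{m+1} s_2)\,a_{m-1}$, which iterates to
\begin{equation*}
a_m \;=\; a_1\,\frac{\alpha_2}{\alpha_{m+1}}\left(\frac{s_1}{s_2}\right)^{m-1}.
\end{equation*}
Here the hypothesis $s_1\ge s_2$ (with $s_1+s_2=1$) forces $s_1,s_2>0$, and irreducibility of a nearest-neighbour chain forces $\alpha_n>0$ for every $n\ge 2$, so the ratios above are well defined and every $a_m$ is strictly positive. I would then fix any $a_1>0$, any constant $c>0$, and take $f_k := c + \sum_{j=1}^{k-1} a_j$. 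This is strictly positive and, by construction, meets Foster's inequality as an equality for every $m\ge 2$.

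The remaining step is to verify $f_k\to\infty$, which is required to conclude recurrence via part~(1) of Proposition~\ref{reccurent_transient_conditions}. Using the trivial bound $\alpha_{m+1}\le 1$,
\begin{equation*}
f_k-c \;\ge\; a_1\,\alpha_2\sum_{m=1}^{k-1}\left(\frac{s_1}{s_2}\right)^{m-1},
\end{equation*}
and the right-hand side diverges geometrically when $s_1>s_2$ and linearly when $s_1=s_2$; in both regimes $f_k\to\infty$, and Foster's criterion yields recurrence. The principal obstacle is the choice of $f_k$ itself: the naive guess $f_k=k$ would require $\alpha_{m+1}s_2\le \alpha_m s_1$ pointwise, a condition that can fail for non-monotone $\alpha_n$ even under $s_1\ge s_2$. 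The harmonic-function ansatz above absorbs the arbitrary unitary parameters $\alpha_n$ and isolates the ratio $s_1/s_2$, which is precisely the quantity encoding the passivity hypothesis.
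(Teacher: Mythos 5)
Your proposal is correct and follows essentially the same route as the paper's own proof: both reduce Foster's inequality for the birth--death chain to the difference recursion $\alpha_{m+1}s_2\,\delta_m\le\alpha_m s_1\,\delta_{m-1}$, saturate it to get increments proportional to $\frac{\alpha_2}{\alpha_{m+1}}\left(\frac{s_1}{s_2}\right)^{m-1}$, and then use $\alpha_{m+1}\le 1$ to lower-bound the partial sums by a divergent geometric (or constant) series, with $A=\{1\}$. Your explicit remarks that irreducibility forces $\alpha_n>0$ and that the naive choice $f_k=k$ fails for non-monotone $\alpha_n$ are welcome clarifications, but the argument is the same.
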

\begin{proof}
Let us assume that $m>1$, such that for arbitrary function $f_k$ we have
\begin{eqnarray}
   \sum_k T_{km} f_k = (1 - \alpha_{m} s_1 - \alpha_{m+1} s_2) f_m + \alpha_{m} s_1 f_{m-1} + \alpha_{m+1} s_2  f_{m+1}
\end{eqnarray}
and, as a consequence, the condition \eqref{function_inequality} simplifies to:
\begin{eqnarray}
  \alpha_{m} s_1 f_{m-1} + \alpha_{m+1} s_2  f_{m+1} \le (\alpha_{m} s_1 + \alpha_{m+1} s_2) f_m.
\end{eqnarray}
Then, let us put an abbreviation $f_{n+1} = f_n + \delta_n$, such that 
\begin{eqnarray} 
   \alpha_{m} s_1 (f_m - \delta_{m-1}) + \alpha_{m+1} s_2  (f_{m} + \delta_m) \le (\alpha_{m} s_1 + \alpha_{m+1} s_2) f_m
\end{eqnarray}
what gives us (for $m>1$)
\begin{eqnarray} 
   \alpha_{m+1} s_2  \delta_m \le \alpha_{m} s_1 \delta_{m-1}
   \end{eqnarray}
or
\begin{eqnarray}
   \delta_m \le \frac{\alpha_{m} s_1}{\alpha_{m+1} s_2} \delta_{m-1} \equiv g_{m-1} \delta_{m-1}.
\end{eqnarray} 
This further implies the following inequality:
\begin{eqnarray} \label{delta_condition}
   \delta_m \le g_{m-1} g_{m-2} \dots g_2 g_1 \delta_1
\end{eqnarray}
where 
\begin{eqnarray}
   g_{m-1} g_{m-2} \dots g_1 = \frac{\alpha_{m} s_1}{\alpha_{m+1} s_2} \frac{\alpha_{m-1} s_1}{\alpha_{m} s_2} \frac{\alpha_{m-2} s_1}{\alpha_{m-1} s_2} \dots \frac{\alpha_{3} s_1}{\alpha_{4} s_2} \frac{\alpha_{2} s_1}{\alpha_{3} s_2} = \frac{\alpha_{2}}{\alpha_{m+1}} \left(\frac{s_1}{s_2}\right)^{m-1}.
\end{eqnarray}
On the other hand, we have:
\begin{eqnarray}
   f_n = f_{n-1} + \delta_{n-1} = f_{n-2} + \delta_{n-2} +  \delta_{n-1} = \dots = f_1 + \sum_{k=1}^{n-1} \delta_k 
\end{eqnarray}
such that putting Eq. \eqref{delta_condition}:
\begin{eqnarray}
   f_n = f_1 + \delta_1 + \sum_{k=2}^{n-1} \delta_k \le f_1 + \delta_1 + \delta_1 \alpha_{2} \sum_{k=2}^{n-1} \frac{1}{\alpha_{k+1}} \left(\frac{s_1}{s_2}\right)^{k-1}.
\end{eqnarray}
We see that for a passive state with $s_1 \ge s_2$, when $n \to \infty$, we have 
\begin{eqnarray}
\sum_{k=2}^{\infty} \frac{1}{\alpha_{k+1}} \left(\frac{s_1}{s_2}\right)^{k-1} \ge  \sum_{k=2}^{\infty} \left(\frac{s_1}{s_2}\right)^{k-1} \to \infty
\end{eqnarray}
since $\alpha_k \le 1$.

Finally, let us consider a function $f_n$ in the form (with $f_1 = 0$ and $\delta_1 = 1$ for simplicity):
\begin{eqnarray}
   f_n = 1 + \alpha_2 \sum_{k=2}^{n-1} \frac{1}{\alpha_{k+1}} \left(\frac{s_1}{s_2}\right)^{k-1}.
\end{eqnarray}
It is seen that for $m>1$ it obeys condition \eqref{function_inequality} and if $s_1 \ge s_2$ it is divergent for $n \to \infty$. Taking as a subset $A = \{1 \}$ (the vacuum state) we see that the chain with transition matrix \eqref{qubit_tran_matrix}, according to Proposition \ref{reccurent_transient_conditions}, is recurrent.
\end{proof}

\subsection{Stationary state}
We try to find the stationary solution for the charging process, such that
\begin{eqnarray}
   p_k = \sum_{m=1}^\infty T_{km} p_m.
\end{eqnarray}
Putting Eq. \eqref{qubit_tran_matrix}, we get the set of equations:
\begin{align}
   p_1 &= (1 - \alpha_2 s_2) p_1 + \alpha_{2} s_1 p_2, \\
   p_2 &=  \alpha_2 s_2 p_1 + (1 - \alpha_{2} s_1 - \alpha_{3} s_2) p_2 + \alpha_{3} s_1 p_3, \\
   p_3 &= \alpha_{3} s_2 p_2 + (1 - \alpha_{3} s_1 - \alpha_{4} s_2)  p_3 + \alpha_{4} s_1 p_4,  \\
   & \dots \\
   p_k &= \alpha_{k} s_2  p_{k-1} + (1 - \alpha_{k} s_1 - \alpha_{k+1} s_2)  p_k + \alpha_{k+1} s_1 p_{k+1},
\end{align}
or equivalently
\begin{align}
&- \alpha_2 s_2 p_1 + \alpha_{2} s_1 p_2 = 0,\\
&\alpha_2 s_2 p_1  - (\alpha_{2} s_1 + \alpha_{3} s_2) p_2 + \alpha_{3} s_1 p_3 =0, \\
&\alpha_{3} s_2 p_2 - (\alpha_{3} s_1 + \alpha_{4} s_2)  p_3 + \alpha_{4} s_1 p_4  =0,\\
& \dots \\
&\alpha_{k} s_2  p_{k-1} - (\alpha_{k} s_1 + \alpha_{k+1} s_2)  p_k + \alpha_{k+1} s_1 p_{k+1} =0.
\end{align}
From this one can easily observe that the non-trivial solution is for probabilities obeying the following relation:
\begin{eqnarray} \label{gibbs_condition}
   p_k = \frac{s_1}{s_2} p_{k+1}.
\end{eqnarray}
However, since we looking for the normalized set of probabilities, i.e., $\sum_k p_k =1$, this can only be satisfied if $s_1 > s_2$, i.e., the stationary state exist only for strictly passive states.

\begin{remark}
The stationary state obeying the condition \eqref{gibbs_condition} is the Gibbs state. Notice that for repeating charging via the strictly passive state (with $s_1 > s_2$) the battery approaches the Gibbs state, which is independent of the applied unitary protocol (parameterized by the coefficients $\alpha_k$). 
\end{remark}

\subsection{Transient chain}
Exploiting Proposition \ref{reccurent_transient_conditions}, we show below that for any TLS in an active state, one can always find a unitary which results in a transient Markov chain. To prove that, we consider a particular unitary in the form (see Eq. \eqref{unitary_example} in the main text): 
\begin{eqnarray} 
    U =  \dyad{1,g} + \sum_{n=2}^\infty \big(\dyad{n,g}{n-1,e} + \dyad{n-1,e}{n,g} \big).
\end{eqnarray}
By direct calculation, we obtain:
\begin{align}
    \Tr_S[U \rho \otimes \xi U^\dag] 
   &= \dyad{1}{1}\Big(p_{1}+p_{2}\Big)s_{1}+ \sum_{n=2}^\infty \dyad{n}{n}   \Big(p_{n-1} s_2+p_{n+1}s_{1}\Big).  
\end{align}
Consequently, the inequality (\ref{function_inequality}) for all
$m \in \mathbb{N} \setminus A$, with $A=\{1\}$, takes the form
\begin{align}\label{transient_condition}
    s_{1}f_{m-1}+s_{2}f_{m+1}\leq f_{m}.
\end{align}
Similarly to the proof of the recurrence (section \ref{rec}), we choose $f_{k}=f_{k-1}-\delta_{k-1}$, $f_{k+1}=f_{k}-\delta_{k}=f_{k-1}-\delta_{k}-\delta_{k-1}$ with positive $\delta$ factors. This guarantees that the funtion $f_{k}$ decreses with $k$, which implies that the condition
$f_k < \inf_{m \in A} f_m$ for at least one $k \in \mathbb{N} \setminus A$ is satisfied. We also select $\delta_k$ such that the positivity condition $f_{k}>0$ is also satisfied. This is achieved by taking $\delta_{k+1}=a\cdot\delta_{k}$, with $0<a<1$. Then, we have $f_{k}=f_{1}-\sum_{i=1}^{k-1}\delta_{i}=f_{1}-\delta_{1}\sum_{i=0}^{k-1}a^{i}$. The series $\sum_{i=0}^{\infty}a^{i}$ converges for arbitrary $0<a<1$, and therefore it is always possible to set $f_{1}$ large enough such that $f_{k}>0$ for arbitrary $k$,  $\delta_{1}$ and $0<a<1$.

(\ref{transient_condition}) may be rewritten as
\begin{align}
    s_{1}f_{m-1}+s_{2}(f_{m-1}-\delta_{m-1}-\delta_{m})\leq f_{m-1}-\delta_{m-1}, 
\end{align}
and since $s_{1}+s_{2}=1$,
\begin{align}
    s_{2}\geq \frac{\delta_{m-1}}{\delta_{m}+\delta_{m-1}}=\frac{1}{1+a}.
\end{align}
By approaching with $a$ the value 1 from the left side: $a\rightarrow 1^-$, we obtain $s_{2}>\frac{1}{2}$, and therefore prove that active qubit states are related to transient chains.


\end{document}